\def\BibTeX{{\rm B\kern-.05em{\sc i\kern-.025em b}\kern-.08em
    T\kern-.1667em\lower.7ex\hbox{E}\kern-.125emX}}
\newcommand{\graph}{\mathcal{G}\xspace}
\newcommand{\edges}{\mathcal{E}\xspace}
\newcommand{\nodes}{\mathcal{V}\xspace}
\newcommand{\weights}{\mathcal{W}\xspace}
\newcommand{\subnode}{\mathcal{S}\xspace}
\newcommand{\setndeg}[2]{\mathrm{d}_{#1}(#2)\xspace}
\newcommand{\numR}{\mathbb{R}}
\newcommand{\optset}{\mathcal{S}^{*}\xspace}
\definecolor{babyblueeyes}{rgb}{0.19, 0.55, 0.91}
\newcommand{\linecomment}[1]{\textcolor{babyblueeyes}{\textit{$\triangleright$ #1}}}
\newcommand{\first}[1]{\textbf{\underline{#1}}}
\newcommand{\second}[1]{\underline{#1}}
\newcommand{\hide}[1]{}
\newcommand{\atn}[1]{\textcolor{red}{#1}}
\newtheorem{problem}{\textbf{Problem}}
\newtheorem{definition}{\textbf{Definition}}
\newtheorem{theorem}{\textbf{Theorem}}
\newtheorem{lemma}{\textbf{Lemma}}
\newtheorem{property}{\textbf{Property}}
\newtheorem{complexity}{\textbf{Complexity}}
\newtheorem{corollary}{\textbf{Corollary}}
\DeclareMathOperator*{\argmax}{arg\,max}
\DeclareMathOperator*{\argmin}{arg\,min}
\DeclareMathOperator*{\argsort}{arg\,sort}
\newcommand{\method}{\textsc{Lowd}\xspace}
\begin{document}

\title{Fast Searching The Densest Subgraph And Decomposition With Local Optimality}

\author{
Yugao Zhu$^1$, Shenghua Liu$^1$, Wenjie Feng$^{2}$, 
Xueqi Cheng$^1$ 
\vspace{1.8mm}
\\
\fontsize{10}{10}\selectfont\itshape
$^1$Institute of Computing Technology, Chinese Academic of Sciences, Beijing, China\\
$^2$Institute of Data Science, National University of Singapore, Singapore\\
\fontsize{8}{8}\selectfont\ttfamily\upshape
zhuyugao22@mails.ucas.ac.cn, liushenghua@ict.ac.cn, wenjie.feng@nus.edu.sg, cxq@ict.ac.cn

}

\maketitle

\begin{abstract}
Densest Subgraph Problem (DSP) is an important primitive problem with a wide range of applications, including fraud detection, community detection, and DNA motif discovery. Edge-based density is one of the most common metrics in DSP. Although a maximum flow algorithm can exactly solve it in polynomial time, the increasing amount of data and the high complexity of algorithms motivate scientists to find approximation algorithms. Among these, its duality of linear programming derives several iterative algorithms including Greedy++, Frank-Wolfe, and FISTA which redistribute edge weights to find the densest subgraph, however, these iterative algorithms vibrate around the optimal solution, which is not satisfactory for fast convergence. We propose our main algorithm Locally Optimal Weight Distribution (\method) to distribute the remaining edge weights in a locally optimal operation to converge to the optimal solution monotonically. Theoretically, we show that it will reach the optimal state of specific quadratic programming, which is called locally-dense decomposition. Besides, we show that it is not necessary to consider most of the edges in the original graph. Therefore, we develop a pruning algorithm using a modified Counting Sort to prune graphs by removing unnecessary edges and nodes, and then we can search the densest subgraph in a much smaller graph.
\end{abstract}

\begin{IEEEkeywords}
Densest subgraph problem, Data mining, Algorithm design, Optimization.
\end{IEEEkeywords}

\section{Introduction}
    \label{sec:intro}

Finding subgraphs with the highest average degrees in large networks is an important primitive problem in data mining, 
and has been applied to different areas including social networks~\cite{shen2010spectral}, biological analysis~\cite{wong2018sdregion}, traffic pattern mining~\cite{liu2019coupled}, 
and graph database~\cite{cohen2003reachability, jin20093}. 
The maximum-flow-based algorithm~\cite{goldberg1984finding} can exactly solve DSP by utilizing the binary search in polynomial time,
which is ill-suited for large graphs due to the prohibitive cost.
The greedy-peeling algorithm (`Greedy' as an abbreviation) proposed by Charikar~\cite{charikar2000greedy} can provide a $1/2$-approximation guaranteed solution in linear time by greedily deleting the nodes with the minimum degree.
There are also various variants for both the exact and approximation algorithms applied to different scenarios.

\begin{figure*}[htbp]
    \centering
    \includegraphics[scale=0.26]{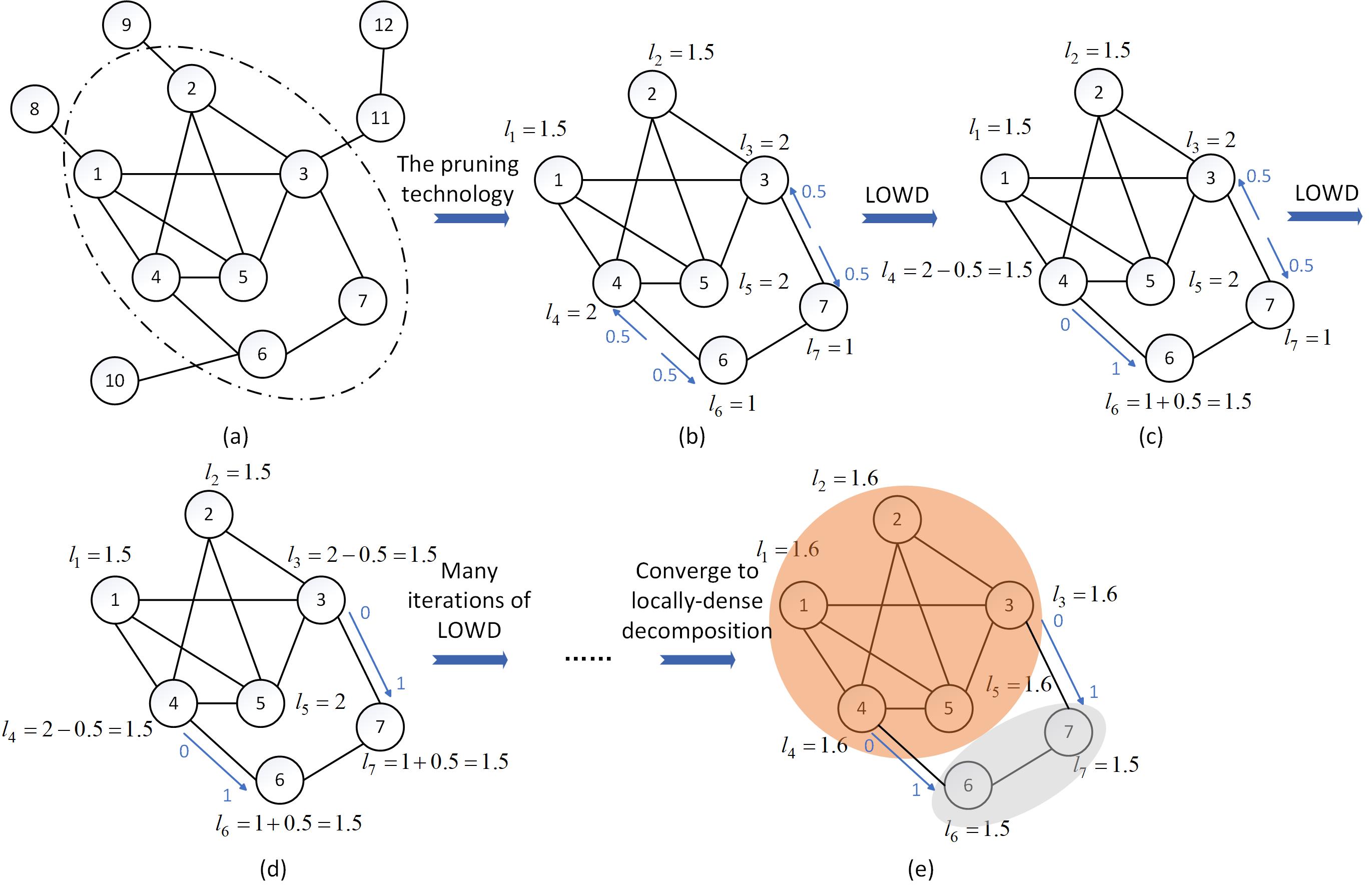}
    \vspace{-0.1in}
    \caption{An example of our methods. $(a)\to (b)$: Remove some nodes using our pruning technique. $(b)$: Initialize node loads by distributing edge weight equally to two endpoints. $(b)\to (c)$: distribute the weight of the edge connecting nodes 4 and 6 to node 6 in one-way. $(c)\to (d)$: distribute the weight of the edge connecting nodes 3 and 7 to node 7 in one-way.  $(d)\to (e)$: After many iterations of \emph{\method}, the graph converges into the locally dense decomposition, that is, loads of nodes 1, 2, 3, 4, and 5 gradually become equal through the redistribution of edge loads and form a nodeset. Nodes 6 and 7 form a nodeset, and edges between two different nodesets are undirectionally distributed from the nodeset with a higher load to the nodeset with a lower node load. And deleting the node with the lowest load one by one can find the densest subgraph, that is, the subgraph consisting of nodes 1, 2, 3, 4, and 5. The order in which edges are selected during edge weight redistribution does not affect the result.}
    \label{fig:example} 
\end{figure*}

In addition, to detect the densest subgraph more efficiently, in recent years, many DSP algorithms and discoveries towards its dual linear programming (LP) are developed, which was first introduced in \cite{charikar2000greedy}. 
E.g., Danisch et al. \cite{danisch2017large} redistribute edge weights based on the Frank-Wolfe algorithm, 
\cite{sawlani2020near} relaxes the constraints of LP dual and proposes an algorithm for dynamic graphs, 
Greedy++ was proposed in \cite{boob2020flowless} based on the multiplicative weights update (MWU) framework, which iteratively implements Greedy to search the densest subgraph. 
In \cite{chekuri2022densest}, it gives a theoretical guarantee to the relationship between iteration count and approximation ratio of Greedy++, and popularizes Greedy++ to super-modular function $f(S)$, i.e., 
the definition of density is $\frac{f(S)}{|S|}$. Besides, \cite{harb2022faster} proposes FISTA algorithm with a faster convergence speed than Greedy++.
Frank-Wolfe, Greedy++, and FISTA are all developed based on classical methods towards convex optimization, like gradient descent and MWU. 
However, these iterative methods will vibrate around the optimal solution, which is time-consuming. 

In this paper, We elaborately design a novel algorithm called \emph{locally optimal weight distribution} (\method) algorithm, 
based on the dual problem of LP for DSP using a locally optimal operation with monotonic convergence to the optimal value, 
it detects the densest subgraph more efficiently than previous studies.
Besides, we develop a pruning technology to remove most nodes that don't belong to the densest subgraph before using \emph{\method} to detect the densest subgraph. Some pruning technique is used in \cite{fang2019efficient} but we use a modified Counting Sort to make our pruning comply with linear time complexity on unweight graphs, we also prove that our pruning technique is a subprocess of Greedy. In addition, we theoretically prove that \method converges to the locally-dense decomposition solution, which is a well-studied problem \cite{khuller2009finding,tatti2015density,danisch2017large,harb2022faster,ma2022finding,tatti2019density}. The locally-dense decomposition consists of a set of subgraphs with nested structures and densities. Such decomposition can be derived by various iterative methods, including Frank-Wolfe, Greedy++, and FISTA.

Moreover, we comprehensively verify the performance of \method over 26 real-world networks.
The experimental results show that \method has 
a better convergence rate than other iterative algorithms, i.e., Frank-Wolfe, Greedy++, and FISTA, when detecting the densest subgraph and locally-dense decomposition.


In Figure \ref{fig:example}, we illustrate an example of \emph{\method} algorithm, and the final locally-dense decomposition. First, we use the pruning technique to delete unnecessary nodes in $(a)$. In the initial state of \method, we distribute each edge weight equally to endpoints, accumulating it as the node loads on the endpoints. Then, \method redistributes edge weights to minimize the difference between two endpoints' loads, for example, in $(b)$ and $(c)$, node 4 with load 2 transits edge weight to node 6 with load 1, and their loads both become 1.5, a similar scenario happens on nodes 5 and 7. Therefore, in order to make node loads more balanced, a node with a higher load should transit edge weight to its neighbors with lower loads. From a global perspective, edge weight should be redistributed from nodes with higher loads to nodes with lower loads in iterations, and the densest subgraph can be easily discovered by deleting the node with the lowest load one by one. And the graph finally converges to the locally-dense decomposition like $(e)$.

In summary, our main contributions include as follows:
\begin{itemize}
    \item[$\bullet$] We propose an iterative mining method called \emph{\method} for DSP according to its LP dual and prove it will converge into the locally-dense decomposition, which is a variant of DSP. We also exhibit how \emph{\method} solves both problems in a locally optimal operation iteratively.
    \item[$\bullet$] We develop a pruning technique to lock the densest subgraph into a graph with a much smaller size using a modified Counting Sort and prove that it is a subprocess of Greedy search.
    \item[$\bullet$] We did sufficient experiments on 26 real-world datasets in various fields, with sizes up to hundreds of millions of edges. Our result shows that \emph{\method} can detect the densest subgraph much faster than other baselines and it can converge to the locally-dense decomposition more efficiently, while the pruning technique as a pre-process can effectively reduce the computation. \end{itemize}

\textbf{Organization.} We organize the rest paper as below. In section \ref{sec:pre} we present two problems formally: the densest subgraph problem and the locally-dense decomposition problem. In section \ref{sec:method} we proposes an iterative method to deal with the linear programming of DSP and quadratic programming of locally-dense decomposition, and develop a pruning technique to lock the densest subgraph into a much smaller graph. We exhibit their efficient performance based on various experiments in section \ref{sec:exp}. In section \ref{sec:related} we review the related work about DSP and its variants. We summarize our work and make a prospect for future work In section \ref{sec:con}.

Due to space limits, part of the theorems, lemmas, and experimental results are given in the appendix.


\section{preliminaries}
    \label{sec:pre}
\begin{table}[t]
    \centering
    \caption{\textnormal{Symbols and Definitions.}}
    \label{tab:notations}
    \vspace{-0.1in}
    \begin{tabular}{c|c} \toprule
        \textbf{Symbol} & \textbf{Definition and Description}\\ \midrule
        $\graph(\nodes,\edges,\weights)$ & \makecell{Graph $\graph$ with nodeset $\nodes$,\\ Edgeset $\edges$ and edge weights $\weights$} \\
        $M$,$N$         & Number of edges and nodes of a graph \\
        $\subnode$          & Subset of nodes, i.e., $\subnode \subseteq \nodes$ \\
        $\subnode^{*}$ & The nodeset of the densest subgraph\\
        $\edges(\subnode),\weights(\subnode)$& the edgeset and total weights induced by $\subnode$\\
        $\rho \left ( \subnode \right )$ & The density of the nodeset $\subnode$\\
        $\rho^{*}$ & The density of the densest subgraph\\
        $\setndeg{\subnode}{u}$ & The degree of node $u$ in $\subnode$ \\
        $l_u, w_e$            & Load of the node $u$ and Weight of the edge $e$ \\
        $f_{e}(u)$            & The weight distributed to node $v$ from edge $e$ \\
        \bottomrule
    \end{tabular}
\end{table}

In this section, we formally define the densest subgraph detection and 
locally-dense decomposition problem that we focus on. 
Table~\ref{tab:notations} summarizes the main symbols used in the paper.

Let $\graph = (\nodes, \edges, \weights)$ be an undirected graph 
with $N = |\nodes|$ vertices and $M = |\edges|$ edges.
For any $e \in \edges \subseteq \nodes \times \nodes$, 
its weight is $w_{e} \in \weights$ with $w_{e} \in \numR_{+}$ 
and $w_{e} = 1$ for the weighted and unweighted graph, respectively.
Given a node subset $\subnode \subseteq \nodes$, 
$\edges(\subnode)$ denotes the set of edges and 
$\weights(\subnode)$ denotes the total weights induced by $\subnode$,
and $\setndeg{\subnode}{u}$ is the degree of $u$ in the induced subgraph, 
i.e., the total weights of edges connected to $u$ within the set $\subnode$. Bold letters are used to represent vectors.

Given the nodeset $\subnode$, 
the \emph{edge density} of the subgraph $\graph(\subnode)$ is defined as 
\begin{equation*}
    \rho(\subnode) \coloneqq \frac{\weights(\subnode)}{|\subnode|} 
= \frac{\sum_{u \in \subnode}\setndeg{\subnode}{u}}{2 \cdot |\subnode|}.
\end{equation*}
Accordingly, we present the formal definition of the densest subgraph problem as below:

\begin{problem}[Densest Subgraph Problem (DSP)]
    \label{prob:dsp}
    Given an undirected graph $\graph = (\nodes, \edges, \weights)$, 
    find the subset of nodes $\optset$ such that 
    $\optset = \argmax_{\subnode \subseteq \nodes} \rho(\subnode)$.
\end{problem}

Another variant of DSP is more general, which is called locally-dense decomposition.

\begin{definition}[Locally-Dense Decomposition (LDD) \cite{tatti2015density,tatti2019density}]
    Given an undirected graph $\graph=(\nodes, \edges, \weights)$, 
    it has a nested decomposition consisting of a sequence 
    $\emptyset=B_0 \subsetneqq B _1 \subsetneqq \ldots \subsetneqq B_k = \nodes$. 
    We define $B_i$ as the maximal densest subgraph properly containing $B_{i-1}$, that is,
    $$ B_i = \argmax_{{S \supsetneqq B_{i-1}}}{\frac{\weights(S)-\weights(B_{i-1})}{| S \setminus B_{i-1}|}} $$
\end{definition}


As we can see, DSP is a sub-problem of locally-dense decomposition, because $B_1$ is the maximal densest subgraph of the whole graph $\graph$, corresponding to the target in Problem~\ref{prob:dsp}. LDD is also closely related to convergence analysis for iterative update methods towards DSP~\cite{danisch2017large, boob2020flowless, harb2022faster}.
Formally, the locally-dense decomposition problem over $\graph$ is formulated as:

\begin{problem}[Locally-dense decomposition             Problem\cite{tatti2015density,tatti2019density,danisch2017large}]
    \label{prob:locally-dense}
    Given an undirected graph $\graph=(\nodes, \edges, \weights)$, 
    find its locally-dense decomposition 
    $\emptyset=B_0 \subsetneqq B _1 \subsetneqq \ldots  \subsetneqq B_k=\nodes$.
\end{problem}

Based on the above problem definition, 
we summarize some important properties of LDD.

\begin{property}[\cite{danisch2017large}]
    Given an undirected graph $\graph=(\nodes, \edges, \weights)$, its locally-dense decomposition is unique. 
    \label{prop:uniquity1}
\end{property}

\begin{property}[\cite{harb2022faster}]
    And for any $u \in B_i$, let 
    \begin{equation*}        \lambda_u=\lambda_i\coloneqq{\frac{\weights(\edges(W))-\weights(\edges(B_{i-1}))}{|W\setminus B_{i-1}|}},
    \end{equation*}
    there is a unique optimal $\bm{\ell}^*$ so that ${\ell_v}^*=\lambda_v$ for each node $v$. And $\lambda_1>\lambda_2>...>\lambda_k$.
    \label{prop:uniquity2}
\end{property}

\begin{property}[\cite{danisch2017large,ma2022finding}]
    For an optimal solution $(\bm{f}^*,\bm{\ell}^*)$, if there is an edge $e=(v_1,v_2)$ with ${\ell_{v_1}}^*>{\ell_{v_2}}^*$, then ${f_e}^*(v_1)=0$.
    \label{prop:one-way}
\end{property}

Properties \ref{prop:uniquity1} and \ref{prop:uniquity2} imply that in the unique locally-dense decomposition, each node in set $B_i\setminus B_{i-1}$ has the same load $\lambda_i$, and the nodeset in the inner nodeset in the inner layer is denser than one in the outer layer. Property \ref{prop:one-way} means the weight of edge e connecting two layers should be only distributed to the node with a lower load in one-way. 

In section \ref{sec:method} we will introduce an algorithm which provides an iterative operation to solve both DSP and locally-dense decomposition problem.

\section{methodology}
    \label{sec:method}
\begin{algorithm}[bt]
    \caption{\method: Locally optimal weight distribution for densest subgraph detection}
    \label{alg:lowd}
    \KwIn{Undirected graph $\graph$, iteration count $T$.}
    \KwOut{An approximately densest subgraph of $\graph$.}
        \For{$e = (u, v) \in \edges$}{
            $f_e(u) = f_e(v) = \frac{w_e}{2}$;
            \hfill \linecomment{initialize edge weight}
        }
        \For{$u \in \nodes$}{
            $\ell_u = \sum_{e\in \edges:u\in e}{f_e(u)}$
            \hfill \linecomment{initialize node load}
        }
        \For{$k : 1 \to T$}{
            \For{$e = (u, v) \in \edges$}{ 
                \If{$ \ell_{u} > \ell_{v}$}{
                    \hfill \linecomment{balance nodes loads as much as possible}
                    
                    $d \leftarrow \min \{(\ell_{u} - \ell_{v})/2, \, f_{e}(u) \} $;

                    $\ell_{u} \leftarrow \ell_{u} - d$;
                    
                    $ f_e(u) \leftarrow f_e(u) - d$;  
                    
                    $\ell_{v} \leftarrow \ell_{v} + d$;
                    
                    $f_e(v) \leftarrow f_e(v) + d$;
                } 
                \Else {
                    $d \leftarrow \min \{(\ell_{v} - \ell_{u})/2, \, f_{e}(v) \} $;
                    
                    $\ell_{u} \leftarrow \ell_{u} + d$;
                    
                    $f_e(u) \leftarrow f_e(u)+d$;
                     
                    $\ell_{v} \leftarrow \ell_{v} - d$;
                    
                    $f_e(v) \leftarrow f_e(v)-d$;
                }
            }
        }
        $\optset, \subnode \leftarrow \nodes, \nodes $; 
        \hfill \linecomment{sort $l_u$ in a non-decreasing order}
        
        $V_{s} \leftarrow \argsort_{u \in \nodes} l_u$;
        
        \For{$i :  1 \to N$}{
            $\subnode \leftarrow \subnode \setminus \{\nodes_{s}(i)\}$; \hfill \linecomment{$\nodes_{s}(i):$ the $i$-th element of $\nodes_{s}$}
            
            \If{$\rho(\optset) < \rho(\subnode)$}{
                $\optset \leftarrow \subnode$;
            }   
        }
    \Return{$ \graph(\optset) $.}
\end{algorithm}

In this section, through the lens of the primal-dual formulation of linear programming, 
we illustrate our idea for solving DSP
and propose the \textbf{l}ocally \textbf{o}ptimal \textbf{w}eight \textbf{d}istribution algorithm, \emph{\method}, 
which is a fast iterative approach to searching the densest subgraph according to the LP dual of DSP.
We also theoretically prove that \emph{\method} makes the graph converge into locally-dense decomposition.
To search the densest subgraph more efficiently, we propose a pruning algorithm, rendering 
the necessary condition for DSP and corresponding to a subprocess of Greedy. 

\subsection{Locally Optimal Weight Distribution Algorithm}

Firstly we present the pseudo code of \emph{\method} in algorithm \ref{alg:lowd} and explain it. In algorithm \ref{alg:lowd}, given an undirected graph $\graph$ and iteration count T, we distribute each edge $e$'s weight to its two endpoints $u$ and $v$ and use vector $\bm{f}$ to describe it. Specifically, $f_e(u)$ is the weight distributed to node $u$ from edge $e$. Besides, we use vector $\bm{\ell}$ to remark the load of each node received from all corresponding edges, specifically, $\ell_u=\sum_{e\in \edges:u\in e}{f_e(u)}$. In lines 1-4, we distribute each edge equally to two endpoints as the initial state and
accordingly calculate node loads. Its time complexity is $O(M)$. Lines 5-14 mean that it redistributes the weight of edge $e=(u,v)$ to minimize the difference of $\ell_u$ and $\ell_v$ for every $e\in \edges$ in iterations. Its time complexity is $O(MT)$. In lines 7-10 we distribute more edge weight from $u$ to $v$, and update step $d$ is $(\ell_{u}-\ell_{v})/2$ if it will not make any $f_{e}(u)<0$, otherwise $d$ is $f_{e}(u)$. In lines 11-14 the circumstance is the opposite. In line 15 we set the whole nodeset as the initial nodeset $S$. In lines 16-20 we sort S according to node loads and delete the node with the lowest weight one by one to get a subgraph with high density, its time complexity is $O(M+NlogN)$.

\begin{complexity}
    The time complexity of \emph{\method} is $O(MT+M+NlogN)$.
\end{complexity}

\textbf{Remark.} Although \emph{\method} can detect the densest subgraph with enough iterations, but we cannot determine iteration count required. Therefore, our \emph{\method} belongs to the approximation algorithm in strict terms because actually we just set the iteration count to get an approximate solution for DSP. The same circumstance happens on other approximation algorithms including Frank-Wolfe in \cite{danisch2017large}, Greedy++ in \cite{boob2020flowless} and FISTA in \cite{harb2022faster}.


\subsection{\emph{\method} converges to DSP solution}

In order to figure out how \emph{\method} searches the densest subgraph iteratively, we first introduce the LP primal-dual of DSP from \cite{charikar2000greedy,boob2020flowless,danisch2017large} as follows. The notation of LP primal-dual is the same as \cite{boob2020flowless}.

\begin{equation}
    \centering
    \label{eq:primal}
    \begin{aligned}
        \textrm{maximize}   \qquad & \sum_{e\in \edges}{w_{e}y_{e}} \\
        \textrm{subject to} \qquad & y_e \le x_u, \qquad \forall e = uv \in \edges \\
                            & y_e \le x_v, \qquad \forall e = uv \in \edges \\
                            & \sum_{v \in \nodes}{x_v} \le 1 \\
                            & y_e \ge 0 \qquad \forall e \in \edges \\
                            & x_v \ge 0 \qquad \forall v \in \nodes \\
    \end{aligned}
\end{equation}

In LP \eqref{eq:primal}, the binary $x_u$ and $y_e$ indicate the contribution to density of the densest subgraph from node $u$ and edge $e$. The maximum of LP \eqref{eq:primal} is $\rho^*$, i.e., the maximum density in DSP. You can set $y_e=\frac{1}{\optset}$ if $e \in \edges(\optset)$ otherwise $y_e=0$ and set $x_u=\frac{1}{\optset}$ if $u \in \optset$ otherwise $x_u=0$, then you will get $\frac{\weights(\optset)}{\optset}$ as the optimal value of $\sum_{e\in \edges}{w_{e}y_{e}}$. Instead of using the primal problem, we resort to the LP dual for DSP to illustrate our motivation:

\begin{equation}
    \centering
    \begin{aligned}
        \textrm{minimize}  \qquad &  D \\
        \textrm{subject to} \qquad & f_e(u)+f_e(v)\ge w_{e}\qquad\forall e=uv\in \edges \\
                        & \ell_v \overset{\text{def}}{=} \sum_{e \ni v}{f_e(v)} \le D \qquad \forall v \in \nodes\\
                        & f_e(u) \ge0\qquad\forall e = uv \in E \\
                        & f_e(v) \ge0\qquad\forall e = uv \in E \\
    \end{aligned}
    \label{eq:dual}
\end{equation}

From strong duality, we know its optimal value is also $\rho^*$. The symbols in LP \eqref{eq:dual} can be interpreted in accordance with the description in \emph{\method}. $f_e(u)$ and $f_e(v)$ should be both positive and the sum is not less than $w_e$ to ensure the edge is distributed thoroughly. Actually, in the LP dual of DSP, we can keep $f_e(u)+f_e(v)=w_{e}$ instead of $f_e(u)+f_e(v)\ge w_{e}$, because the former can keep constraint condition and doesn't increase optimization objective $D$. Consequently, we can get the optimization objective $D$ with tighter constraint in LP \eqref{eq:tighter_dual} as below.

\textbf{note:} During the whole process we set $D =\max_{v\in \nodes} \ell_v$ to minimize it as much as possible.

\begin{equation}
    \centering
    \label{eq:tighter_dual}
    \begin{aligned}
        \textrm{minimize}   \qquad & D \\
        \textrm{subject to} \qquad & f_e(u) + f_e(v) = w_e \qquad \forall e = uv \in \edges \\
                        & \ell_v = \sum_{e\ni v}{f_e(v)} \le D \qquad \forall v \in \nodes\\
                        & f_e(u) \ge 0 \qquad \forall e = uv \in \edges \\
                        & f_e(v) \ge 0 \qquad \forall e = uv \in \edges \\
    \end{aligned}
\end{equation}

Intuitively speaking, what \emph{\method} does is to redistribute each edge weight to minimize the difference between loads of two endpoints for each edge in iterations, i.e., propagate edge weights from nodes with higher loads to nodes with lower loads. And the impact of \emph{\method} on LP \eqref{eq:tighter_dual} is that the node $v$ with the highest node load $l_{v}=D$ will decrease its weight because $v$ transits its edge weights to its neighbors. After experiencing some iterations, the highest node load will decrease gradually to some value, in fact, the value is the minimum of $D$, i.e., $\rho^*$. Formally we propose the following theorem:

\begin{theorem}
    The iterative operation in \emph{\method} can make the sequence $\left \{ D_t \right \}$ converge to $\rho^*$, where $D_t (t>0)$  means the optimization objective $D$ after $t$ iterations of \emph{\method} and $D_{0}$ is the value in the initial state. 
    \label{th:dsp}
\end{theorem}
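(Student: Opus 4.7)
The plan is to combine monotonicity of $\{D_t\}$, LP weak duality, and a fixed-point analysis of any subsequential limit of the iterates to conclude that $D_t \to \rho^*$. First I would verify that \method maintains a feasible solution of the tighter dual LP \eqref{eq:tighter_dual} at every step: the invariant $f_e(u)+f_e(v)=w_e$ is preserved because each update adds $d$ to one side and subtracts $d$ from the other; nonnegativity $f_e(\cdot)\ge 0$ is preserved because the cap $d\le f_e(u)$ (resp.\ $f_e(v)$) is built into the update rule; and $D_t=\max_{v\in\nodes}\ell_v$ is by definition a feasible choice for the dual variable $D$. Therefore, by weak duality for the primal--dual pair \eqref{eq:primal}--\eqref{eq:tighter_dual}, we have $D_t\ge \rho^*$ for every $t$, so the sequence is bounded below.

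Next I would establish monotonicity. Consider a single edge operation on $e=(u,v)$ with (say) $\ell_u>\ell_v$. The new loads are $\ell_u'=\ell_u-d$ and $\ell_v'=\ell_v+d$ with $d\le (\ell_u-\ell_v)/2$, so $\max\{\ell_u',\ell_v'\}\le \ell_u$, while all other loads are untouched. Hence $\max_{v}\ell_v$ is non-increasing across every inner-loop step, and in particular across each outer iteration, so $\{D_t\}$ is non-increasing. A monotone sequence bounded below converges, and thus $D_t\to D_\infty$ for some $D_\infty\ge \rho^*$.

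The crux is to show $D_\infty=\rho^*$. I would pass to a subsequential limit $(\bm{f}^\infty,\bm{\ell}^\infty)$ of the iterates, which exists by compactness since $0\le f_e(u)\le w_e$ and the loads stay in a bounded interval. At this limit, for every edge $e=(u,v)$ with $\ell^\infty_u>\ell^\infty_v$ we must have $f^\infty_e(u)=0$: otherwise in some neighborhood of the limit a single update on $e$ would strictly reduce $D$ by an amount bounded away from zero, contradicting $D_t\downarrow D_\infty$. Thus the limit satisfies the one-way distribution condition of Property \ref{prop:one-way}: on every edge weight is concentrated on the lower-load endpoint whenever loads differ.

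Finally, let $S^*=\{v:\ell^\infty_v=D_\infty\}$. By the one-way condition, every edge from $S^*$ to $\nodes\setminus S^*$ contributes $0$ to $\sum_{v\in S^*}\ell^\infty_v$, while every edge inside $S^*$ contributes its full weight $w_e$. Therefore $\sum_{v\in S^*}\ell^\infty_v=\weights(\edges(S^*))$, and since the left side equals $D_\infty\cdot|S^*|$ we obtain $\rho(S^*)=D_\infty$. Combining $\rho(S^*)\le \rho^*$ (by definition of $\rho^*$) with $D_\infty\ge \rho^*$ (weak duality) forces $D_\infty=\rho^*$, and the monotone convergence of $D_t$ lifts this to the whole sequence. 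The main obstacle I expect is the subsequential-limit step: quantifying the strict decrease of $D$ per non-fixed-point update in a neighborhood of the limit (so that sliding to a non-fixed-point limit is ruled out), which essentially requires a Lipschitz-style lower bound on the decrement in terms of the residual $(\ell^\infty_u-\ell^\infty_v)$ and $f^\infty_e(u)$.
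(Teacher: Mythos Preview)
Your proposal follows the same arc as the paper: feasibility is preserved, $D_t$ is non-increasing, LP duality gives $D_t\ge\rho^*$, monotone convergence yields $D_t\to D_\infty$, and the contradiction comes from showing that the maximum-load set $S^*$ at the limit has density exactly $D_\infty$. The density computation $\rho(S^*)=D_\infty$ you write out is precisely the calculation inside the paper's Lemma~\ref{lem:decrease}.

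The obstacle you flag is genuine, and in fact the specific justification you give for the one-way condition does not work as stated: processing a single edge $e=(u,v)$ with $\ell_u>\ell_v$ and $f_e(u)>0$ lowers $\ell_u$, but it need not lower $D=\max_w\ell_w$ at all when other vertices share the maximum load, so no contradiction with $D_t\downarrow D_\infty$ follows from one violating edge. The paper circumvents this by first proving the finite-time Lemma~\ref{lem:decrease}: whenever $D_t>\rho^*$, the current max-load set $A=\{u:\ell_u=D_t\}$ must have some outgoing edge with $f_e(u)>0$ (otherwise $\rho(A)=D_t>\rho^*$), so a full sweep through all edges strictly shrinks $A$ while no vertex can re-enter; hence $D_{t+N}<D_t$. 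The limit argument is then run on the set $\{u:\ell_u\to D_\infty\}$ rather than on a subsequential limit of the iterate vector. Your compactness route can be repaired by switching the potential from $D_t$ to the quadratic $Q_t=\sum_v\ell_v^2$ used in Section~\ref{subsecion:ldd}: unlike $D$, the quantity $Q$ strictly decreases by $2d(\ell_u-\ell_v-d)\ge d\,(\ell_u-\ell_v)>0$ on \emph{every} violating edge update, which supplies exactly the Lipschitz-style decrement you were seeking and forces the one-way condition at any accumulation point; your final paragraph then goes through unchanged.
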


Next, we will give theoretical proofs for Theorem \ref{th:dsp}, first we need the following classical theorem for convergence.

\begin{theorem}[Monotone Convergence Theorem~\cite{bibby1974axiomatisations}]
If the sequence $\left \{ a_n \right \} $ has an upper bound and it is monotonically non-decreasing (or has a lower bound and it is monotonically non-increasing), then the sequence $\left \{ a_n \right \} $ converges, i.e., a monotonically bounded sequence must have a limit.
\label{th:tmct}
\end{theorem}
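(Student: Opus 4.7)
The plan is to first apply the Monotone Convergence Theorem (Theorem~\ref{th:tmct}) to establish that $\{D_t\}$ has a limit, and then pin that limit to $\rho^*$ using a secondary potential function together with the structural optimality condition in Property~\ref{prop:one-way}. The two required ingredients for Theorem~\ref{th:tmct} are monotonicity and a lower bound.

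First I would show that $\{D_t\}$ is monotonically non-increasing. Consider one inner step that picks an edge $e=(u,v)$ with $\ell_u > \ell_v$ (the other branch is symmetric). By construction $d \le (\ell_u-\ell_v)/2$, so the updated loads satisfy $\ell_u^{\mathrm{new}} = \ell_u - d \le \ell_u$ and $\ell_v^{\mathrm{new}} = \ell_v + d \le (\ell_u+\ell_v)/2 \le \ell_u$, while every other load is unchanged; hence $D=\max_v \ell_v$ cannot increase. Next, a straightforward invariant check shows that \method maintains $f_e(u)+f_e(v)=w_e$ and $f_e(\cdot)\ge 0$ throughout (the initialization gives both, and the clip $d\le f_e(u)$ preserves non-negativity). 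Setting $D_t \coloneqq \max_v \ell_v$ at step $t$ makes $(\bm{f}, D_t)$ a feasible point of LP~\eqref{eq:tighter_dual}, whose optimum equals $\rho^*$ by strong duality; therefore $D_t \ge \rho^*$ for all $t$. Theorem~\ref{th:tmct} then yields convergence to some $D^{\infty} \ge \rho^*$.

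The crux is to show $D^{\infty} = \rho^*$. I would introduce the auxiliary potential $\Phi_t \coloneqq \sum_{v\in\nodes} \ell_v^2$. A direct computation shows that an update with transfer $d>0$ changes $\Phi$ by $-2d(\ell_u-\ell_v-d)$, which is non-positive and strictly negative whenever $d \in (0,\ell_u-\ell_v)$. Hence $\{\Phi_t\}$ is monotonically non-increasing and bounded below by $0$, so it also converges. If we had $D^{\infty} > \rho^*$, then by Properties~\ref{prop:uniquity1}--\ref{prop:uniquity2} the limit configuration would not be the LDD optimum, so Property~\ref{prop:one-way} would be violated at some edge $e=(u,v)$ with $\ell_u > \ell_v$ and $f_e(u)>0$. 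Every sweep of \method over this edge would then force a strictly positive drop in $\Phi$, contradicting the convergence of $\{\Phi_t\}$.

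The main obstacle is making this last contradiction fully rigorous, since the per-iteration decrease of $\Phi$ is state-dependent and could in principle tend to zero. The clean way to close the argument is via compactness: the feasible set $\{\bm{f}: f_e(u)+f_e(v)=w_e,\, f_e\ge 0\}$ is compact, so $\{\bm{f}^{(t)}\}$ has a convergent subsequence with some limit $\bm{f}^{\infty}$ inducing loads $\bm{\ell}^{\infty}$; by continuity of the update map and the strict-decrease property above, $\bm{f}^{\infty}$ must be a fixed point of \method, meaning every edge with $\ell_u^{\infty} > \ell_v^{\infty}$ has $f_e^{\infty}(u) = 0$. Combined with Properties~\ref{prop:uniquity1}--\ref{prop:one-way}, this identifies $\bm{\ell}^{\infty}$ with the unique LDD loads, whose maximum entry is $\lambda_1=\rho^*$, so $D^{\infty}=\rho^*$ as required.
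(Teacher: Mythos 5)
Your proposal does not prove the statement in question. The statement is the Monotone Convergence Theorem itself: a purely real-analytic fact that every monotone bounded sequence $\{a_n\}$ of real numbers has a limit. It says nothing about graphs, loads, linear programs, or \method. Your argument instead proves (a version of) Theorem~\ref{th:dsp} --- that the sequence $\{D_t\}$ produced by \method converges to $\rho^*$ --- and it does so by \emph{invoking} Theorem~\ref{th:tmct} as a black box (``Theorem~\ref{th:tmct} then yields convergence to some $D^{\infty}$''). Relative to the assigned statement this is circular: you apply the very theorem you were asked to establish, and nowhere in your text is there any argument for why a bounded monotone sequence must converge. Note that the paper itself does not prove this theorem either; it cites it as a classical result~\cite{bibby1974axiomatisations}, so there is no ``paper proof'' for your argument to match --- but a blind proof attempt should still have supplied the standard one.

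For completeness, the expected proof is short and rests on the completeness of $\mathbb{R}$. Suppose $\{a_n\}$ is non-decreasing and bounded above (the other case is symmetric, replacing $\sup$ by $\inf$). By the least-upper-bound property, $L \coloneqq \sup_n a_n$ exists and is finite. Given $\varepsilon > 0$, since $L - \varepsilon$ is not an upper bound there is an index $N$ with $a_N > L - \varepsilon$; monotonicity then gives $L - \varepsilon < a_N \le a_n \le L$ for all $n \ge N$, hence $|a_n - L| < \varepsilon$, i.e., $a_n \to L$. As a separate remark, the material you did write --- in particular the compactness/subsequence argument identifying the fixed point of \method with the unique LDD loads --- is a reasonable sketch in the spirit of the paper's proofs of Theorems~\ref{th:dsp} and~\ref{th:decomposition}, and is arguably more careful than the paper's limit-sense contradiction; but it answers a different question than the one posed.
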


Now that $\left\{D_t \right \}$ has a lower bound $\rho^{*}$ according to strong duality, and $\left \{ D_t \right \} $ is monotonically non-increasing under \emph{\method}'s iterative operation (because in the whole process \emph{\method} just balances loads of two endpoints and it will not produce any new node with load higher than the optimization objective $D$). Then $D_t$ is non-increasing and it must converge to some value.

It is helpless for solving DSP if the optimization objective converges to some value which is not the minimum, i.e., $\rho^{*}$. However, we can use the following lemma to help to prove Theorem \ref{th:dsp}.

\begin{lemma}
    for $\forall t \in \mathbb{N}$, there must be $D_{t+N}<D_{t}$ if $D_t \ne \rho^{*}$, where N is the node number of the whole graph.
    \label{lem:decrease}
\end{lemma}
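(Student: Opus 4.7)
The plan is to argue by contradiction. Suppose $D_t > \rho^*$ but $D_{t+N} = D_t$; write $D := D_t$. Since each single-edge balance in the inner loop replaces the two endpoint loads by values lying in the interval $[\min(\ell_u,\ell_v),\, \max(\ell_u,\ell_v)]$, the global maximum load is non-increasing, so $D_k = D$ for every $k \in [t, t+N]$. Define
\begin{equation*}
    A_k := \{ v \in \nodes : \ell_v^{(k)} = D \},
\end{equation*}
the set of max-load vertices after iteration $k$; each $A_k$ is nonempty. I will show that $A_{k+1} \subsetneq A_k$ strictly for every $k \in [t, t+N-1]$, which forces $|A_{t+N}| \leq |A_t| - N \leq 0$ and contradicts $A_{t+N} \ne \emptyset$.

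The heart of the argument is a within-iteration invariant: once a vertex $v$ has load strictly below $D$ at any point of the sweep, its load remains below $D$ for the rest of that sweep. Indeed, when an edge $(v,u)$ is processed, in the losing case $\ell_v$ only decreases, while in the gaining case the new load is bounded by $(\ell_v + \ell_u)/2 \leq (\ell_v + D)/2 < D$ because every intermediate load is $\leq D$ by the same non-increase argument. Chaining this along the edge-processing order yields $A_{k+1} \subseteq A_k$, and moreover shows that every $v \in A_{k+1}$ satisfies $\ell_v \equiv D$ at every moment of iteration $k+1$ (else it could not return to $D$).

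Now suppose, toward contradiction, that $A_{k+1} = A_k =: A$ for some $k$. Then every $v \in A$ keeps load exactly $D$ throughout iteration $k+1$, so for each cross-edge $e = (v,u)$ with $v \in A$, $u \notin A$ the update $d = \min\{(\ell_v-\ell_u)/2,\, f_e(v)\}$ must equal $0$; since $\ell_v > \ell_u$, this forces $f_e^{(k)}(v) = 0$. Summing $\ell_v^{(k)}$ over $v \in A$, interior edges contribute $f_e(v_1)+f_e(v_2)=w_e$ and cross-edges contribute $0$ on the $A$-side, hence
\begin{equation*}
    |A|\cdot D \;=\; \sum_{v \in A}\ell_v^{(k)} \;=\; \weights(A),
\end{equation*}
giving $\rho(A) = D > \rho^*$ and contradicting the optimality of $\rho^*$. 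Thus $A_k$ must shrink strictly at every step, which closes the argument.

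The main obstacle I expect is precisely the within-iteration invariant in the second paragraph: because edges are processed sequentially and intermediate loads depend on earlier updates within the same sweep, one has to verify carefully that ``$\ell_v < D$ at some moment implies $\ell_v < D$ from then on'' is preserved across all subsequent edge updates in that iteration, and that every neighbor's current load is still $\leq D$ when the update happens. Once this invariant is pinned down, the rest is a pigeonhole observation across the $N$ iterations together with the density calculation on $A$.
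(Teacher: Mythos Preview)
Your argument is correct and follows essentially the same route as the paper: show that the set $A$ of max-load vertices must strictly shrink each iteration because, if it did not, all cross-edges would have $f_e(v)=0$ on the $A$-side and hence $\rho(A)=D>\rho^*$; then pigeonhole over $N$ iterations finishes. Your treatment is in fact more careful than the paper's, since you explicitly verify the within-iteration invariant (``once below $D$, always below $D$ during the sweep''), whereas the paper merely asserts that \method\ never creates a new node with load $\geq D_t$ without spelling out why intermediate updates cannot push a load back up to $D$.
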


\begin{proof}
    For $\forall t \in \mathbb{N}$, if $D_t>\rho^{*}$, we make an assertion that there must be some nodes $u$ and $v$ and an edge $e=(u,v)$ with $\ell_u=D_t$, $\ell_v<D_t$ and $f_e(u)>0$. Suppose this doesn't stand up, so if we set $A=\left \{u|l_u=D_t \right \}$ and $B=\left \{u|l_u<D_t \right \}$, for $\, \forall u \in A,v \in B$ and $e=(u,v)$, then $f_e(u)=0$. Therefore we conclude that:
    \begin{equation*}
        \footnotesize
        \centering
        \begin{aligned}
            \rho \left ( A \right ) 
            &=\frac{\sum_{e \in \edges(A)} w_e}{|A|}=\frac{\sum_{e=(u,v),\,u,v\in A} f_e(u)+f_e(v)}{|A|}\\
            &=\frac{\sum_{e=(u,v),\,u,v\in A} {(f_e(u)+f_e(v))}+\sum_{e=(u,v),\,u\in A,v\in B}{f_e(u)}}{|A|}\\
            &=\frac{\sum_{u\in A}{\sum_{e\ni u}{f_e(u)}}}{|A|}\\
            &=\frac{\sum_{u\in A}{l_u}}{|A|}=\frac{\sum_{u\in A}{D_t}}{|A|}=D_t>\rho^*\\
        \end{aligned}
    \end{equation*}

    It will produce a subgraph whose density is larger than the densest subgraph. That will lead to a contradiction. 

    Therefore in each iteration, node $u$ will transit edge weight to node $v$ and then there will be $l_u<D_t$ according to our assertion. Notice that in the whole process, \emph{\method} will not produce any new node $v$ with $l_v\ge D_t$ after t iterations. The number of nodes with load $D_t$ will decrease in each iteration until 0. Given that $N$ is the node number of the whole graph, after $N$ iterations, there isn't any node $v$ with $l_v=D_t$, i.e., $D_{t+N}<D_{t}$.
\end{proof}

Combining Lemma \ref{lem:decrease} and Theorem \ref{th:tmct} is not enough to prove our claim because $\left\{D_t \right \}$ may decrease infinitesimally and converge to another value instead of the minimum. However, when dealing with this difficulty, it is useful to combine the proof of Lemma \ref{lem:decrease} in the limit sense.

\begin{proof}[Proof of Theorem \ref{th:dsp}]
    We adopt the proof by contradiction which is similar to the proof of Lemma \ref{lem:decrease}. Suppose that \emph{\method} makes $\left \{ D_t \right \}$ converge to any other value $D$ which $D > \rho^{*}$. We set $A=\left \{u|l_u\to D \right \}$ and $B=\left \{u|l_u\not\to D \right \}$. When $T \to \infty$, $|A|$ will decrease and converge to a fixed number according to Theorem \ref{th:tmct}, and all edges connecting $A$ and $B$ are distributed to $B$ in the limit sense. Then $\rho(A)$ will be tending to $D$, which makes a contradiction because $D > \rho^{*}$. Only when $D = \rho^{*}$, there will be no contradiction.
\end{proof}

The explanation for lines 19-24 in Algorithm \ref{alg:lowd} is closely related to the locally-dense decomposition, which will be proved in the next subsection. In fact, $B_1$ in LDD is the maximal densest subgraph and nodes in $B_1$ will have the max node load $\rho^*$. Therefore in LDD, as long as we delete all the nodes whose weights are not the maximum, the remaining subgraph is the maximal densest subgraph. If \emph{\method} makes the graph converge into LDD, given that the node load in the densest subgraph will be not completely the same after several iterations, it is safe to delete the node with the lowest weight one by one.

A drawback of the iterative algorithm is that we don't know whether we have found the densest subgraph so as to stop iterations. However, on unweighted graphs, if the difference between optimization objective $D$ and the maximum density found by \emph{\method} is less than $\frac{1}{n(n-1)}$, we can confirm that \emph{\method} has found the densest subgraph and it can stop iterations, which is similar with the maximum flow algorithms in \cite{goldberg1984finding}. Given that the edge weights satisfy $\weights \in \numR_{+}$ on weighted graphs, it doesn't work to use the difference $\frac{1}{n(n-1)}$ to determine whether we have found the densest subgraph if it is weighted.


Next, we will explan the relationship between \emph{\method} and LDD.

\subsection{\emph{\method} converges to LDD's solution}
\label{subsecion:ldd}
There are many iterative methods dealing with the LP dual of DSP including Frank-Wolfe in \cite{danisch2017large}, Greedy++ in \cite{boob2020flowless} and FISTA in \cite{harb2022faster}. Among them, \cite{harb2022faster,danisch2017large} claim that their methods can converge into the locally-dense decomposition. And \cite{harb2022faster} also claims that Greedy++ will converge into it. As an iterative method, \emph{\method} also does it. Firstly, let's introduce the quadratic program(QP) formula of locally-dense decomposition in \cite{danisch2017large,harb2022faster}.

\begin{equation}
    \label{eq:decomposition}
    \centering
    \begin{aligned}
        \textrm{minimize} \qquad  & \sum_{v \in \nodes}{\ell_v^2} \\
        \textrm{subject to} \qquad & f_e(u) + f_e(v) = 1 \qquad \forall e = uv \in \edges \\
                    & \ell_v = \sum_{e \ni v}{f_e(v)} \le D \qquad \forall v \in \nodes\\
                    & f_e(u) \ge 0 \qquad \forall e = uv \in \edges \\
                    & f_e(v) \ge 0 \qquad \forall e = uv \in \edges \\
    \end{aligned}
\end{equation}

The relationship between QP \eqref{eq:decomposition} and LDD is: When \emph{\method} makes $\sum_{v \in \nodes}{{\ell_v}^2}$ converge to the minimum, then the solution $(\bm{f},\bm{\ell})$ converges to the solution of locally-dense decomposition. 

In algorithm \ref{alg:lowd}, \emph{\method} will decide an update step d and redistribute the weight of edge e to decrease the optimization objective of QP \eqref{eq:decomposition} as much as possible. For example, for $e=(v_1,v_2)$ where $\ell_{v_1}>\ell_{v_2}$ and $f_e(v_1)>0$, we use $\bm{\ell}^{'}$ to represent the node loads after updating, i.e., ${\ell_{v_1}}^{'}\gets \ell_{v_1}-d$, ${\ell_{v_1}}^{'}\gets \ell_{v_1}+d$ and ${\ell_{v}}^{'} \gets \ell_v$ for $v\neq v_1,v_2$. Then:

\begin{center}
    \vspace{-0.15in}
    \begin{equation*}
        \begin{aligned}
            \sum_{v \in \nodes}{{{\ell_{v}}^{'}}^{2}}&=\sum_{v \in \nodes,v\ne v_1,v_2}{{{\ell_{v}}^{'}}^{2}}+{{\ell_{v_1}}^{'}}^{2}+{{\ell_{v_2}}^{'}}^{2}\\
            &=\sum_{v \in \nodes,v\ne v_1,v_2}{{\ell_{v}}^{2}}+{(\ell_{v_1}-d)}^2+{(\ell_{v_2}+d)}^2\\
            &=\sum_{v \in \nodes}{{\ell_{v}}^2}+2d\cdot(d+\ell_{v_2}-\ell_{v_1})<\sum_{v \in \nodes}{{\ell_{v}}^2}\\
        \end{aligned}
        \label{eq:decrease}
    \end{equation*}
\end{center}

In algorithm \ref{alg:lowd}, update step $d=\min \{(\ell_{v_1}-\ell_{v_2})/2, \, f_{e}(v_1) \}$ then $d+\ell_{v_2}-\ell_{v_1}<0$, and in this case $d>0$. The less-than sign in the last line holds true.

The local optimality of \emph{\method} is because it decreases the optimization objective of QP \eqref{eq:decomposition} and LP \eqref{eq:tighter_dual} as much as possible. In the above equation of QP \eqref{eq:decomposition}, $d=(\ell_{v_1}-\ell_{v_2})/2$ can minimize $2d\cdot(d+\ell_{v_2}-\ell_{v_1})$ according to the mean inequality so that $\sum_{v \in \nodes}{{{\ell_{v}}^{'}}^2}$ is the minimum, and there should be $d\ge f_e(v_1)$ to satisfy constraint $f_e(v_1)\ge 0$, then $d=\min \{(\ell_{v_1}-\ell_{v_2})/2, \, f_{e}(v_1) \}$ is a locally optimal operation which satisfies the constraint and decrease $\sum_{v \in \nodes}{{\ell_{v}}^2}$ as much as possible. As for LP \eqref{eq:tighter_dual}, \emph{\method} is also locally optimal to minimize $D$.

\begin{theorem}
    \emph{\method} will optimize QP \eqref{eq:decomposition} until $\sum_{v \in \nodes}{{l_v}^2}$ converge to the minimum.
    \label{th:decomposition}
\end{theorem}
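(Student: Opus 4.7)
The plan is to replicate the argument structure of Theorem~\ref{th:dsp}: establish monotonic non-increase of the QP objective, invoke Theorem~\ref{th:tmct} to obtain convergence, then rule out non-optimal limits by contradiction. Monotonicity is immediate from the display preceding the theorem statement: for each processed edge $e=(v_1,v_2)$ with $\ell_{v_1}>\ell_{v_2}$, the chosen update $d=\min\{(\ell_{v_1}-\ell_{v_2})/2,\,f_e(v_1)\}\ge 0$ gives
\begin{equation*}
\sum_{v\in\nodes}\ell_v^2-\sum_{v\in\nodes}(\ell_v')^2 \;=\; 2d\bigl(\ell_{v_1}-\ell_{v_2}-d\bigr)\;\ge\;0.
\end{equation*}
Hence the sequence $\{\sum_v\ell_v^{(t)\,2}\}$ is non-increasing and bounded below (e.g.\ by $0$ or by the QP minimum $a^*$), so Theorem~\ref{th:tmct} yields a limit $a^\infty\ge a^*$.

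To rule out $a^\infty>a^*$, I would first note that the QP feasible region is compact: each $f_e(u)\in[0,w_e]$ and each $\ell_u$ is a bounded linear function of $\bm{f}$. Thus Bolzano--Weierstrass furnishes a convergent subsequence of outer-loop iterates $(\bm{f}^{(t_k)},\bm{\ell}^{(t_k)})\to(\bm{f}^\infty,\bm{\ell}^\infty)$ with $\sum_v(\ell_v^\infty)^2=a^\infty$. Assuming $a^\infty>a^*$, the limit is feasible but not optimal; by the converse of Property~\ref{prop:one-way} (sufficiency of the one-way condition for QP optimality, justified below) there must exist an edge $e=(v_1,v_2)$ with $\ell_{v_1}^\infty>\ell_{v_2}^\infty$ and $f_e^\infty(v_1)>0$. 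Choose $\delta>0$ with $\ell_{v_1}^{(t_k)}-\ell_{v_2}^{(t_k)}\ge\delta$ and $f_e^{(t_k)}(v_1)\ge\delta$ for all sufficiently large $k$, which is possible by continuity. When \method\ scans edge $e$ inside iteration $t_k+1$, the guaranteed decrease in the objective is at least $2\min(\delta/2,\delta)\cdot(\delta/2)=\delta^2/2$, a fixed positive amount occurring infinitely often --- which contradicts the convergence of $\{\sum_v\ell_v^{(t)\,2}\}$. So $a^\infty=a^*$.

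The principal obstacle is justifying that the one-way condition is not only necessary but also \emph{sufficient} for QP optimality, since Property~\ref{prop:one-way} only supplies the necessary direction. I would dispatch this in a short lemma via KKT: QP~\eqref{eq:decomposition} is convex with linear constraints and strict convexity in $\bm{\ell}$, so any feasible point satisfying stationarity together with complementary slackness on the multipliers of $f_e(\cdot)\ge 0$ is a global optimum; writing out the Lagrangian, complementary slackness forces $f_e(v_1)>0\Rightarrow \ell_{v_1}\le\ell_{v_2}$, which is precisely the one-way condition. A secondary technicality --- that the decrement along the chosen edge stays uniformly positive along the subsequence rather than shrinking to zero --- is handled cleanly by the continuity/open-condition argument above, once the non-optimal limit point $(\bm{f}^\infty,\bm{\ell}^\infty)$ has been extracted by compactness.
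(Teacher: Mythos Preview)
Your argument is sound and takes a genuinely different route from the paper's. Both open with monotonicity plus Theorem~\ref{th:tmct}, then diverge at the contradiction step. The paper argues that any limiting configuration satisfying the one-way condition must itself constitute a locally-dense decomposition, and then invokes the \emph{uniqueness} of the LDD (Properties~\ref{prop:uniquity1} and~\ref{prop:uniquity2}) to force that configuration to coincide with the unique $\bm{\ell}^*$; the limit itself is treated informally via a case split (``\method\ stops'' versus ``edge weights change infinitesimally''). You instead bypass the LDD machinery: you extract an actual subsequential limit by compactness and use KKT for the convex QP to show directly that the one-way condition is \emph{sufficient} for optimality, then run a quantitative-decrease contradiction. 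Your route is more self-contained (no dependence on the cited LDD uniqueness results) and analytically tighter; the paper's route keeps the QP visibly tied to the combinatorial decomposition it is meant to recover.

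One point to tighten: the $\delta^2/2$ decrement you claim ``when \method\ scans edge $e$ inside iteration $t_k+1$'' is computed from $\ell_{v_1}^{(t_k)},\ell_{v_2}^{(t_k)},f_e^{(t_k)}(v_1)$, i.e.\ the values at the \emph{start} of that outer iteration; by the time $e$ is reached in the inner loop the loads $\ell_{v_1},\ell_{v_2}$ may have drifted through earlier edge updates (only $f_e(v_1)$ is guaranteed unchanged until $e$ itself is touched), so the bound does not follow as written. The patch is short: each per-edge decrement is at least $2d_{e'}^2$ (since $d_{e'}\le\text{gap}_{e'}/2$), so a vanishing per-iteration total decrease forces every $d_{e'}\to 0$; hence within-iteration load drift vanishes and your $\delta$-bounds survive at the moment $e$ is processed for all large $k$. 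The paper's own proof glosses over the analogous issue, so this is a tightening rather than a repair.
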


\begin{proof}
    First, any edge redistribution in \emph{\method} will decrease the optimization objective of QP \eqref{eq:decomposition} as we claimed before. According to Theorem \ref{th:tmct}, $\sum_{v \in \nodes}{{\ell_{v}}^2}$ will converge to some value. Suppose it is not the minimum, if \emph{\method} stops changing any edge redistribution, the node loads in $\graph$ must be satisfied with property \ref{prop:one-way}, i.e., if two endpoints have different loads, the edge connecting them should be only distributed in one-way to the node with a lower load. Otherwise, \emph{\method} can continue its iterative operation to decrease $\sum_{v \in \nodes}{{{\ell_{v}}^{'}}^{2}}$. Then, the nodes with the same loads will consist of new nodesets $B_i$, from a global perspective, it will result in a new sequence $\emptyset=B'_0 \subsetneqq B' _1 \subsetneqq B' _2 \subsetneqq ... \subsetneqq B' _k=\nodes$ with $\lambda' _1>\lambda' _2>...>\lambda' _k$. According to property \ref{prop:one-way}, $B' _i=\mathop{\arg\max}\limits_{W \supsetneqq B' _i-1}{\frac{\weights(\edges(W))-\weights(\edges(B' _{i-1}))}{|W\setminus B' _{i-1}|}}$ so it is satisfied with the definition of locally-dense decomposition. Now that we suppose this decomposition doesn't converges to the minimum of QP \eqref{eq:decomposition}, it must be another LDD with a different {$\bm{\ell}^*$}, which contradicts with the property \ref{prop:uniquity1} and \ref{prop:uniquity2}.
    
    If \emph{\method} doesn't stop, it must decrease the optimization objective of QP \eqref{eq:decomposition} infinitesimally. If so, edge weights must be changed infinitesimally when $T \to \infty $, otherwise the optimization objective will not decrease infinitesimally, which makes a contradiction to convergence. Given that \emph{\method} manages to balance loads of two endpoints connected by an edge, node loads in the locally-dense decomposition must obey property \ref{prop:one-way} in the sense of limit because edge weights only can be changed infinitesimally. At last, these node loads will produce a new sequence $\emptyset=B_0 \subsetneqq B _1 \subsetneqq B_2 \subsetneqq ... \subsetneqq B_k=\nodes$ in the limit sense, which makes a contradiction similarly as above.

    Therefore, the optimization objective in QP \eqref{eq:decomposition} will converge to the minimum, and the minimum represents the graph converges into the locally-dense decomposition.
\end{proof}

Therefore, \emph{\method} can make $(\bm{f},\bm{\ell)}$ converge into locally-dense decomposition. LDD is an important basic problem for many variants of DSP. \cite{ma2022finding} use it to detect a variant of DSP called locally densest subgraph. Besides, we provide a perspective on the relationship between LDD and another variant of DSP concerning the densest subgraph with size constraint, called \textit{densest k-subgraph} (DkS) and \textit{at-least-k subgraph} (DalkS) problems. Its proof is provided in the appendix.  

\begin{corollary}
     \label{coroll:dks}
    The following results hold up on DkS and DalkS problems:
    \begin{enumerate}[label={\arabic*.}]
        \item For $k=|B_j|\,,\forall j \in \left \{ 1,2,...,k \right \} $, the DkS (or DalkS) is just the subgraph composed of nodes in $B_j$.
        \item For $|B_{j-1}|<k<|B_j|\,, \forall j \in \left \{ 1,2,...,k \right \}$, the upper bound of density in DkS (or DalkS) is $\frac{\sum_{i=0}^{j-1}{\lambda_{i}*|B_i|}+(k-|B_{j-1}|)*\lambda_{j}}{k}$.
    \end{enumerate}
\end{corollary}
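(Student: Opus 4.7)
My plan rests on combining the LP dual of DSP in \eqref{eq:tighter_dual} with the LDD characterization in Properties~\ref{prop:uniquity1}--\ref{prop:one-way}. The workhorse inequality is that for any nodeset $S\subseteq\nodes$ and any feasible dual solution $(\bm f,\bm\ell)$,
\[
\weights(\edges(S)) \;=\; \sum_{e\in\edges(S)}(f_e(u)+f_e(v)) \;\le\; \sum_{v\in S}\ell_v,
\]
since nodes in $S$ may additionally receive positive contributions from edges crossing the boundary of $S$. Evaluating this at the optimal $\bm\ell^*$, where $\ell_v^*=\lambda_v$ by Property~\ref{prop:uniquity2}, yields the universal bound $\rho(S)\le\frac{1}{|S|}\sum_{v\in S}\lambda_v$, which drives both parts of the corollary.

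For Claim~1 ($k=|B_j|$), since $\lambda_1>\lambda_2>\cdots>\lambda_k$ (Property~\ref{prop:uniquity2}) and $\lambda_v$ is constant on each layer $B_i\setminus B_{i-1}$, the top $|B_j|$ values of $\lambda_v$ are realized uniquely by the nodes of $B_j$. Hence $\sum_{v\in S}\lambda_v$ over $|S|=k$ is maximized by $S=B_j$. To show the bound is attained, I would invoke Property~\ref{prop:one-way}: every cross-layer edge from $B_j$ to $\nodes\setminus B_j$ is distributed entirely to the outer endpoint of lower load, so the boundary term vanishes and $\rho(B_j)=\frac{1}{|B_j|}\sum_{v\in B_j}\lambda_v$. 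Hence $B_j$ solves DkS at size $|B_j|$. For DalkS, I would extend this by observing that any $S'$ with $|S'|>|B_j|$ must include nodes from strictly outer layers with smaller $\lambda$-values; a direct weighted-average argument then shows $\rho(S')$ cannot exceed $\rho(B_j)$.

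For Claim~2 ($|B_{j-1}|<k<|B_j|$), I again maximize $\sum_{v\in S}\lambda_v$ over $|S|=k$. Monotonicity of $\lambda$ across layers forces the optimal $S$ to include all of $B_{j-1}$ plus any $k-|B_{j-1}|$ nodes from $B_j\setminus B_{j-1}$, each contributing $\lambda_j$; after rewriting with $|B_0|=0$ this reproduces the stated expression as an upper bound on $\rho(S)$. For DalkS, the same reasoning applies: any $S$ with $|S|=k'>k$ satisfies $\rho(S)\le\frac{1}{k'}\sum_{v\in S}\lambda_v$, and the extra entries bring smaller $\lambda$-values, so the Claim~2 bound remains valid.

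The main obstacle is the tightness step in Claim~1: one must verify at the LDD optimum that the cross-boundary leakage $\sum_{e \text{ leaving } B_j} f_e^*(v_{\text{inner}})$ is exactly zero, which requires Property~\ref{prop:one-way} to hold on \emph{every} cross-layer edge (not merely generically). A secondary subtlety is that Claim~2 provides only an upper bound and may not be tight for an arbitrary slice of $B_j\setminus B_{j-1}$, because the induced subgraph on such a slice need not saturate its $\lambda_j$-budget; I would make this explicit rather than attempt to sharpen the statement.
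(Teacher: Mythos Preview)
Your proposal is correct and follows essentially the same approach as the paper: the paper proves the workhorse inequality $\rho(S)\le\frac{1}{|S|}\sum_{u\in S}\ell_u^*$ by adding the nonnegative boundary terms, then maximizes the right-hand side over $|S|=k$ using the layer-monotonicity of $\lambda$ (Property~\ref{prop:uniquity2}), and invokes Property~\ref{prop:one-way} to obtain equality when $S=B_j$. Your worry about Property~\ref{prop:one-way} holding on \emph{every} cross-layer edge is already guaranteed by its statement (since $\ell^*$ is strictly separated across layers), so there is no genuine obstacle there.
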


\subsection{Pruning pre-process}

In our experiments, \emph{\method} can detect the densest subgraph much faster than other baselines without pruning, but it is more efficient to use a pruning technique to locate the densest subgraph before using any iterative algorithm. And the pruning is also a subprocess of Greedy, which is also useful to speed up other approximation and exact algorithms.

First, we have the following necessity condition about the optimal solution for DSP, i.e., the optimal set $\optset$.

\begin{theorem}[Lower Bound\cite{khuller2009finding}]
    For each node $v \in \optset$ of the densest subgraph, $\setndeg{\optset}{v} \ge \rho(\optset)$.
    \label{th:lowerbound}
\end{theorem}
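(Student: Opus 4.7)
The plan is to prove this by contradiction using a simple swap/removal argument: if the lower bound fails at some node $v \in \optset$, then removing $v$ from $\optset$ yields a strictly denser subgraph, contradicting the optimality of $\optset$.

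First I would handle the trivial case $|\optset| = 1$ separately, where $\setndeg{\optset}{v} = 0 = \rho(\optset)$, so the inequality holds. For the main case assume $|\optset| \ge 2$, and suppose toward contradiction that some $v \in \optset$ satisfies $\setndeg{\optset}{v} < \rho(\optset)$. Define $\subnode' = \optset \setminus \{v\}$. Since removing $v$ removes exactly the edges incident to $v$ within $\optset$, whose total weight is $\setndeg{\optset}{v}$, we have $\weights(\subnode') = \weights(\optset) - \setndeg{\optset}{v}$ and $|\subnode'| = |\optset| - 1 \ge 1$.

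Next I would compute $\rho(\subnode')$ and compare it with $\rho(\optset)$. Using the definition of density and the assumption $\setndeg{\optset}{v} < \rho(\optset) = \weights(\optset)/|\optset|$, a short algebraic manipulation (cross-multiplying, which is legal because both $|\optset|$ and $|\optset|-1$ are positive) gives
\begin{equation*}
\rho(\subnode') = \frac{\weights(\optset) - \setndeg{\optset}{v}}{|\optset| - 1} > \frac{\weights(\optset)}{|\optset|} = \rho(\optset).
\end{equation*}
This contradicts the definition of $\optset$ as the maximizer of $\rho$ over all subsets of $\nodes$, completing the proof.

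The proof is essentially routine once the contradiction strategy is fixed; there is no real obstacle. The only point that deserves care is confirming that $\weights(\subnode') = \weights(\optset) - \setndeg{\optset}{v}$, which relies on the (standard) convention that $\setndeg{\optset}{v}$ is the sum of weights of all edges inside $\optset$ incident to $v$, so removing $v$ removes precisely this much weight from $\weights(\optset)$. After that, the algebraic step is a one-line inequality.
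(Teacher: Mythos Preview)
Your argument is correct and is the standard proof of this fact: assuming a node $v$ with $\setndeg{\optset}{v} < \rho(\optset)$, removing $v$ strictly increases the density, contradicting optimality. The paper itself does not supply a proof of this theorem; it is simply stated and attributed to Khuller--Saha~\cite{khuller2009finding}, and then used as a black box to justify the pruning rule. So there is nothing to compare against, and your self-contained derivation (including the careful handling of the $|\optset|=1$ case and the observation that $\weights(\optset\setminus\{v\}) = \weights(\optset)-\setndeg{\optset}{v}$) is exactly what is needed.
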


Therefore, we can conclude that $\setndeg{\mathcal{H}}{v} \ge \setndeg{\optset}{v} \ge \rho(\optset) \ge \rho(\mathcal{H})$ if $\optset \subset \mathcal{H}$, which means $v\not\in \optset$ if $v\in \mathcal{H}$ and $\setndeg{\mathcal{H}}{v}<\rho(\mathcal{H})$. Then we can use the density of a subgraph $\mathcal{H}$ as the lower bound to filter out the candidates of $\optset$ as long as the density of $\mathcal{H}$ can be easily obtained.  The pruning technique first estimates the lower bound based on the density of the remaining subgraph $\mathcal{H}$ at the current time and deletes nodes whose degrees are lower than $\rho(\mathcal{H})$ iteratively and their adjacent edges, then it updates the bound using updated $\rho(\mathcal{H})$. Details are in Algorithm \ref{alg:pruning}.

\begin{algorithm}[t]
    \SetKwFunction{DSPSolver}{DSPSolver}
    \caption{\textsc{Pruning}}
    \label{alg:pruning}
    \KwIn{Undirected graph $\graph$; plug-in \DSPSolver: \{ \textsc{Maxflow}, \textsc{greedy}, \textsc{greedy++}, \method, \textsc{Frank-Wolfe}, \textsc{FISTA}, etc. \}.}
    \KwOut{Solution of the densest subgraph of $\graph$.}

    $\mathcal{H} \leftarrow \graph$;

    $\delta \leftarrow \rho(\mathcal{H})$;

    
    \While{$\exists \, \mathcal{H^{'}} \subseteq  \mathcal{H}$ with $\setndeg{\mathcal{H}}{u} < \delta \quad \forall u \in \mathcal{H^{'}}$}{
            
            Remove all nodes in $\mathcal{H^{'}}$ and all its associated edges from $\mathcal{H}$;
            
            $\delta \leftarrow \rho(\mathcal{H})$;
    }
    
    $\subnode \leftarrow$ \DSPSolver{$\mathcal{H}$};    
    \hfill \linecomment{plug-in DSP solver}
    
    \Return{$\graph(\subnode)$.}
\end{algorithm}

This pruning technique is similar to pruning 1 in \cite{fang2019efficient}. However, we don't consider k-core explicitly and we modify the data structure Counting Sort to speed it up on unweighted graphs, which assigns it a specific time complexity, i.e., $O(M+N)$. On an unweighted graph, first, we calculate the degree of all nodes and sort them by Counting Sort, i.e., we record nodesets of each degree 0-$d_{max}$, and update the degrees of remaining nodes and corresponding sets after each iteration. In each round of scanning, we only need to scan all the sets in the range of $\left [ \left  \lfloor LowBound_{1}  \right \rfloor, \left \lceil LowBound_{2}-1  \right \rceil  \right ] $ , which are exactly the nodes to be deleted in the next iteration. $LowBound_{1}$ is the bound before updating $\rho(\mathcal{H})$ while $LowBound_{2}$ is the bound after updating $\rho(\mathcal{H})$. Although nodes with degrees less than $LowBound_{1}$ will appear after deleting nodes in each iteration, we can just put them into the set $\left \lfloor LowBound_{1} \right \rfloor $. Throughout the process, we check all the nodesets in the range of $\left [ 0,d_{max}  \right ] $ and only repeat searching $\lfloor LowBound_{1} \rfloor$ at most $T$ times. It is obvious to know $d_{max}$ and $T$ are both lower than $N$. So the time complexity of checking the nodeset of each degree is $O(N)$ on an unweighted graph. Therefore, the time complexity of the pruning on an unweighted graph is $O(M+N)$. 

On a weighted graph, after each iteration, we have to traverse the remaining nodes to find the nodes that can be deleted in the next iteration. Then the time complexity of the pruning on weighted graphs is $O(M+TN)$, where $T$ is the number of iterations until it stops.

Their difference in time complexity is because we use modified Counting Sort to make sure each node is only checked one time on unweighted graphs. 

\begin{complexity}
    For the pruning technique, its time complexity is $O(M+N)$ on unweighted graphs and $O(M+TN)$ on positive weighted graphs.
\end{complexity}

It seems that on weighted graphs it is not efficient because we don't know iteration count $T$ in advance, in fact in our experiments the time consumption of these two versions on unweighted and weighted graphs doesn't differ a lot. It is also an efficient solution that we can set an upper bound for $T$ on weighted graphs.

Next, we claim that pruning is a specific subprocess of Greedy as below. Lemma \ref{lem:k-core} is a crucial property of Greedy and we use it to prove Theorem \ref{th:pruning}. The definition of k-core and their proofs are in the appendix.

\begin{lemma}
     For any k, k-core can be achieved by the Greedy algorithm in \cite{charikar2000greedy}.
    \label{lem:k-core}
\end{lemma}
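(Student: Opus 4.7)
The plan is to show that Charikar's Greedy peeling, viewed as a sequence of induced subgraphs $\graph = G_0 \supsetneq G_1 \supsetneq \ldots \supsetneq G_{N-1}$ (where $G_{i+1}$ is obtained by deleting a minimum-degree vertex of $G_i$), passes through the $k$-core $C_k$ as one of the intermediate $G_i$'s. The strategy is the classical ``peel below $k$'' characterization of $C_k$: it is the unique subgraph obtained by repeatedly deleting any vertex of degree less than $k$, and the order of deletions is irrelevant. So it suffices to argue that while Greedy has not yet reached $C_k$, the vertex it picks is always a vertex of degree $< k$, hence a legitimate peel step for the $k$-core computation.

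\textbf{Key steps.} First I would recall the two facts used: (i) the $k$-core $C_k$ is the unique maximal subgraph of $\graph$ with minimum degree at least $k$, and (ii) if $v$ has degree less than $k$ in a supergraph $H$ of $C_k$, then $v \notin C_k$ (indeed $\setndeg{C_k}{v} \le \setndeg{H}{v} < k$ would violate the defining property of $C_k$). Second, I would establish the invariant $C_k \subseteq G_i$ for every $i$ up to the round in which $G_i = C_k$, by induction: $C_k \subseteq G_0 = \graph$ is trivial, and for the inductive step assume $C_k \subseteq G_i$ and $G_i \neq C_k$. Then $G_i$ strictly contains $C_k$, so $G_i$ cannot itself satisfy ``minimum degree $\ge k$'' (otherwise $G_i$ would contradict the maximality of $C_k$), hence there exists $u \in G_i$ with $\setndeg{G_i}{u} < k$. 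Greedy deletes a minimum-degree vertex $v$ of $G_i$, so $\setndeg{G_i}{v} \le \setndeg{G_i}{u} < k$, and by fact (ii) $v \notin C_k$; thus $C_k \subseteq G_i \setminus \{v\} = G_{i+1}$.

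Third, I would show termination at $C_k$. Since $|G_i|$ strictly decreases with $i$ and $C_k \subseteq G_i$, the sequence must reach the first index $i^\star$ with $G_{i^\star} = C_k$; equivalently, $i^\star = |\nodes| - |C_k|$, i.e., Greedy has peeled off exactly the vertices of $\nodes \setminus C_k$ (in some order it chose) before touching any vertex of $C_k$. That intermediate state is precisely the $k$-core, which establishes the lemma.

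\textbf{Main obstacle.} The only nontrivial step is the claim that whenever $G_i \supsetneq C_k$, the minimum degree in $G_i$ is strictly below $k$. The natural temptation is to pick a vertex $v \in G_i \setminus C_k$ directly and argue about its degree in $G_i$, but $v$ could very well have $\setndeg{G_i}{v} \ge k$ (vertices outside $C_k$ can have high degree in $\graph$, it is only in the fully peeled subgraph that they get exposed). The clean way around this is the maximality argument sketched above: $G_i$ being a supergraph of $C_k$ that is strictly larger cannot itself have minimum degree $\ge k$, since $C_k$ is by definition the \emph{maximal} such subgraph. This neatly avoids having to track which specific vertex will become low-degree, and it is what makes the peel-order of Greedy interchangeable with the $k$-core peeling order.
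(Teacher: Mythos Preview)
Your proposal is correct and follows essentially the same approach as the paper: both arguments hinge on the maximality of the $k$-core to show that Greedy never deletes a $k$-core vertex while any non-$k$-core vertex remains. The paper frames this as a contradiction at the first deletion of a $k$-core vertex (the remaining graph would then have minimum degree $\ge k$ and strictly contain $C_k$), whereas you phrase it as a forward induction maintaining $C_k \subseteq G_i$; the underlying maximality step is identical.
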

\begin{theorem}
    In Greedy\cite{charikar2000greedy}, when the density decreases for the first time, the remaining subgraph is the subgraph $\mathcal{H}$ when the pruning in Algorithm \ref{alg:pruning} stops iterations.
    \label{th:pruning}
\end{theorem}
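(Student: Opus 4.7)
The plan is to prove the equality by identifying pruning's terminal subgraph as a $k$-core of $\graph$ and then invoking Lemma~\ref{lem:k-core} to place that $k$-core inside Greedy's peeling trajectory.

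First, I would establish that the pruning output $\mathcal{H}$ equals the $k$-core of $\graph$ for $k=\rho(\mathcal{H})$. One direction, $\mathcal{H}\subseteq k\text{-core}$, is immediate from the stopping condition: every remaining vertex $u$ satisfies $\setndeg{\mathcal{H}}{u}\ge\delta=k$. For the reverse direction I would induct on the pruning rounds, using a short algebraic check that $\delta_i := \rho(\mathcal{H}_i)$ is non-decreasing in $i$ (any batch removed has average degree strictly below $\delta_i$, so its removal cannot lower $\rho$). Consequently $\delta_i\le k$ throughout, which means any vertex $v$ of the $k$-core has $\setndeg{\mathcal{H}_i}{v}\ge\setndeg{k\text{-core}}{v}\ge k\ge\delta_i$ and therefore survives every pruning round. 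By Lemma~\ref{lem:k-core} there is then a step $t_P$ of Greedy's peeling sequence with $\subnode_{t_P}=k\text{-core}=\mathcal{H}$, which embeds pruning's output inside Greedy's execution.

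Next I would show that $t_P$ is exactly the step just before Greedy's density first strictly decreases. For $t\le t_P$ the vertex peeled at step $t$ lies outside the $k$-core, so its current degree is below $k$ and hence below $\rho(\subnode_{t-1})$, which forces $\rho(\subnode_t)\ge\rho(\subnode_{t-1})$ and rules out a strict decrease before step $t_P$. At step $t_P+1$ Greedy must peel a vertex $v\in\mathcal{H}$ whose degree in $\subnode_{t_P}$ is at least $k=\rho(\subnode_{t_P})$, and the one-line identity
\begin{equation*}
\rho(\subnode_{t_P}\setminus\{v\})=\frac{\rho(\subnode_{t_P})\cdot|\subnode_{t_P}|-\setndeg{\subnode_{t_P}}{v}}{|\subnode_{t_P}|-1}
\end{equation*}
gives $\rho(\subnode_{t_P+1})\le\rho(\subnode_{t_P})$, with strict inequality whenever $\setndeg{\subnode_{t_P}}{v}>k$, producing the first strict drop.

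The main obstacle I anticipate is the boundary case where the $\min$-degree in $\mathcal{H}$ equals $k$ exactly: Greedy may then take several density-preserving peels from within $\mathcal{H}$ before a strict drop actually occurs, so ``the density decreases for the first time'' must be read as the first strict drop, and $\mathcal{H}$ must be identified as the maximal Greedy subgraph with $\min$-degree at least $\rho$. Aligning this interpretation with the pruning threshold ($<\delta$ versus $\le\delta$) and ruling out any degenerate cascade of neutral peels that leaves the $k$-core is the delicate part of the write-up; once that alignment is fixed, the three steps above close the argument.
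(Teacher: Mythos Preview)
Your overall strategy---identify the pruning output with the $\delta$-core, invoke Lemma~\ref{lem:k-core} to locate it on Greedy's peeling trajectory, then argue monotonicity of density before that point and a drop after---matches the paper's. However, your monotonicity argument contains a genuine gap. You write that for $t\le t_P$ the peeled vertex $v$ has current degree below $k$ ``and hence below $\rho(\subnode_{t-1})$.'' That implication fails: if density is non-decreasing up to $t_P$ and ends at $k=\rho(\mathcal{H})$, then $\rho(\subnode_{t-1})\le k$, not $\ge k$, so $d_v<k$ says nothing about $d_v$ versus $\rho(\subnode_{t-1})$. Early in Greedy the current density can be much smaller than $k$, and knowing only that the min-degree is below $k$ does not prevent it from exceeding the current density and forcing a decrease.

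The paper closes this gap by using the pruning \emph{layers} $\mathcal{H}'_1,\mathcal{H}'_2,\ldots$ rather than the single terminal threshold $k$. While some $u\in\mathcal{H}'_1$ survives in Greedy's current set, the min-degree vertex $v$ satisfies
\[
d_{\subnode_{t-1}}(v)\;\le\; d_{\subnode_{t-1}}(u)\;\le\; d_{\graph}(u)\;<\;\rho(\graph)\;\le\;\rho(\subnode_{t-1}),
\]
the last inequality being an inductive invariant that starts as equality. Once layer~$1$ is exhausted the argument is repeated with threshold $\rho(\mathcal{H}_1)$, and so on through the layers; this is exactly the refinement your single-$k$ bound is missing. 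Your boundary-case worry about density-neutral peels after reaching $\mathcal{H}$ is legitimate, and the paper's proof does not resolve it either---it only shows monotone increase up to $\mathcal{H}$ and leaves the ``first decrease'' side implicit.
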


In the above theorem, we can see that although the deletion rule is different in the pruning and Greedy. Pruning will get the same subgraph as the end of the monotonic increase of density in Greedy. Given that the pruning doesn't need to know which node has the lowest degree, it can achieve a faster deletion speed than Greedy.

In fact, the pruning is so efficient that it can delete most nodes and ensure that the densest graph is in the remaining subgraph with a much smaller size. 

\section{experiment}
    \label{sec:exp}
\begin{table}[t]
    \centering
    \caption{\textnormal{Statistical information of Datasets.}}
    \label{tab:datasets}
    \vspace{-0.1in}
    \begin{tabular}{l|c|r} \toprule
      \textbf{Dataset} & \textbf{$ |\nodes|$} & \textbf{$|\edges|$} \\ \midrule
        ca-HepPh                & 12006                 & 118,489       \\
        comm-EmailEnron         & 36692                 & 183,831       \\
        ca-AstroPh              & 18771                 & 198,050       \\
        PP-Pathways             & 21267                 & 338,636       \\
        soc-sign\_slashdot      & 77350                 & 468,554       \\
        soc-sign\_epinion       & 131828                & 711,210     \\
        soc-Twitter\_ICWSM      & 465017                & 833,540       \\
        rating-StackOverflow    & {[}545195,96678{]}    & 1,301,942     \\
        ego-twitter             & 81306                 & 1,342,296     \\
        soc-Youtube             & 1134890               & 2,987,624     \\
        comm-WikiTalk           & 2394385               & 4,659,565     \\
        nov\_user\_msg\_time    & {[}2748001,8083629{]} & 48,078,692    \\
        cit-Patents\_AMINER     & 6840994               & 54,022,588    \\
        soc-Twitter\_ASU        & 11316811              & 63,555,749    \\
        soc-Livejournal         & {[}3201203,7489073{]} & 112,307,385   \\
        soc-Orkut               & 3072441               & 117,185,083   \\
        soc-SinaWeibo           & 58655849              & 261,321,033   \\
        \hline
        wang-tripadvisor        & {[}145316,1759{]}     & 175,755       \\
        rec-YelpUserBusiness    & {[}45982,11538{]}     & 229,906       \\
        bookcrossing            & {[}77802,185955{]}    & 433,652       \\
        librec-ciaodvd-review   & {[}21019,71633{]}     & 1,625,480     \\
        movielens-10m           & {[}69878,10677{]}     & 10,000,054    \\
        epinions                & {[}120492,755760{]}   & 13,668,320    \\
        libimseti               & 220970                & 17,233,144    \\
        rec-movielens           & {[}283228,193886{]}   & 27,753,444    \\
        yahoo-song              & {[}1000990,624961{]}  & 256,804,235   \\ \bottomrule
    \end{tabular}
\end{table}

\begin{table*}[htbp]
    \centering
    \caption{\textnormal{Running time (sec.) comparison for utilizing the pruning pre-process.}}
    \label{tab:time}
    \vspace{-0.1in}
    \resizebox{0.98\textwidth}{!}{
    \begin{tabular}{l|c|c||c|c|c||c|c||c|c} \toprule
        \textbf{Dataset}& w$\_$Pruning & uw$\_$Pruning  & BBST & Priority Tree & Pruning+PT & Max-Flow & w$\_$Pruning+M & Doubly-Linked List & uw$\_$Pruning+DLL \\ \midrule
        ca-HepPh        & 0.008           & 0.008   & 0.196    & 0.033   & \textbf{0.016}   & 1.351   & \textbf{0.145}    & 0.013   & \textbf{0.009}   \\
        comm-EmailEnron          & 0.016  & 0.017   & 0.377    & 0.066   & \textbf{0.025}   & 2.834   & \textbf{0.272}    & 0.03    & \textbf{0.019}   \\
        ca-AstroPh      & 0.013          & 0.013   & 0.354    & 0.065   & \textbf{0.043}   & 2.818   & \textbf{0.812}     & 0.025   & \textbf{0.017}   \\
        PP-Pathways  & 0.037 & 0.028  & 0.73     & 0.141   & \textbf{0.045}   & 4.92    & \textbf{0.412}      & 0.044   & \textbf{0.031}   \\
        soc-sign\_slashdot       & 0.041 & 0.041  & 1.112    & 0.187   & \textbf{0.072}   & 9.894   & \textbf{0.626}      & 0.082   & \textbf{0.046}   \\
        soc-sign\_epinion        & 0.063   & 0.064  & 1.877    & 0.31    & \textbf{0.111}   & 16.354  & \textbf{0.92}     & 0.15    & \textbf{0.073}   \\
        soc-Twitter\_ICWSM       & 0.168 & 0.15  & 4.807    & 0.537   & \textbf{0.202}   & 29.824  & \textbf{0.746}       & 0.288   & \textbf{0.156}   \\
        rating-StackOverflow     & 0.327 & 0.288  & 7.087    & 0.986   & \textbf{0.35}    & 55.803  & \textbf{0.515}      & 0.572   & \textbf{0.297}   \\
        ego-twitter     & 0.084         & 0.084   & 2.724    & 0.425   & \textbf{0.149}   & 24.217  & \textbf{1.357}      & 0.162   & \textbf{0.097}   \\
        soc-Youtube    & 0.569  & 0.577  & 18.292   & 2.035   & \textbf{0.631}   & 118.627 & \textbf{1.805}     & 1.118   & \textbf{0.585}   \\
        comm-WikiTalk   & 0.94   & 0.904   & 60.556   & 3.399   & \textbf{1.053}   & 230.949 & \textbf{2.862}     & 2.08    & \textbf{0.939}   \\
        nov\_user\_msg\_time     & 8.68  & 9.987   & 1317.44  & 58.34   & \textbf{9.268}   & $\ge$6h     & \textbf{35.823}     & 26.425  & \textbf{10.08}   \\
        cit-Patents\_AMINER   & 3.796  & 3.865    & 536.038  & 30.414  & \textbf{6.748}   & $\ge$6h     & \textbf{98.012}   & 18.005  & \textbf{4.967}   \\
        soc-Twitter\_ASU  & 6.68   & 6.867     & 1169.1   & 42.905  & \textbf{8.026}   & $\ge$6h     & \textbf{48.612}  & 20.659  & \textbf{7.22}    \\
        soc-Livejournal  & 9.676  & 10.326  & 1239.405 & 67.74   & \textbf{16.951}  & $\ge$6h     & \textbf{405.168} & 32.478  & \textbf{11.861}  \\
        soc-Orkut        & 9.693  & 10.173  & 506.536  & 80.45   & \textbf{13.225}  & 5274.2  & \textbf{339.112}   & 34.934  & \textbf{10.748}  \\
        soc-SinaWeibo    & 51.417 & 46.132  & 32197.2  & 225.065 & \textbf{56.184}  & $\ge$6h     & \textbf{252.011}   & 146.471 & \textbf{48.488}  \\
        \hline
        wang-tripadvisor       & 0.052 &     --   & 0.772    & 0.122   & \textbf{0.059}   & 6.203   & \textbf{0.278}     &  --       &  --                \\
        rec-YelpUserBusiness     & 0.026 &   --    & 0.61     & 0.105   & \textbf{0.04}    & 5.305   & \textbf{0.433}        &   --      & --                 \\
        bookcrossing       & 0.107  &    --   & 1.979    & 0.341   & \textbf{0.122}   & 17.467  & \textbf{0.426}   &    --         &    --              \\
        librec-ciaodvd-review    & 0.093  &  --   & 3.245    & 0.543   & \textbf{0.152}   & 41.812  & \textbf{1.629}        &    --     &   --               \\
        movielens-10m       & 0.3  &   --     & 18.088   & 2.781   & \textbf{1.433}   & 412.519 & \textbf{65.712}       & --         &     --             \\
        epinions     & 0.943  &   --  & 36.722   & 5.685   & \textbf{1.383}   & 624.063 & \textbf{17.495}        & --        &      --            \\
        libimseti       & 0.844  &   --   & 78.901   & 7.302   & \textbf{1.767}   & 942.125 & \textbf{67.012}        &  --       &   --               \\
        rec-movielens     & 0.98  &  --    & 27.152   & 9.493   & \textbf{3.587}   & 1376.71 & \textbf{135.278}      &   --      &      --            \\
        yahoo-song     & 10.936  & --   & 691.58   & 125.762 & \textbf{27.115}  & $\ge$6h     & \textbf{2732.31}    & --        &      --         \\ 
        \hline
        \multicolumn{10}{l}{\multirow{2}{0.98\textwidth}{\textbf{note:}`w\_' and `uw\_' denote the weighted and unweighted version; `Pruing+PT', `Pruning+M', and `Pruing+DLL' denote using the pruning to process the dataset at first and following with Priority Tree, Max-Flow, and Doubly-linked List, respectively.}} \\
        \multicolumn{10}{l}{} \\
        \bottomrule
    \end{tabular}
    }
\end{table*}

We design experiments to answer the following questions:
\begin{compactitem}
    \item{\textbf{Effectiveness Of The Pruning Technique:}} How significantly can the pruning reduce the size of the graph? How about its speedup to the approximation and exact algorithms?
    \item{\textbf{Effectiveness of \emph{\method} When Solving DSP:}} Does \emph{\method} detect the densest subgraph faster than other iterative algorithms?
    \item{\textbf{Effectiveness of \emph{\method} When Solving LDD:}} Does \emph{\method} optimize the linear programming of locally-dense decomposition faster than other baselines?
\end{compactitem}

\textbf{Datasets\&Implementation.}
We collected 26 networks from popular publicly available repositories, including Stanford’s SNAP database~\cite{jure2014snapnets}, AMiner scholar datasets~\cite{wan2019aminer}, Network Repository~\cite{nr2015aaai}, ASU’s Social Computing Data Repository~\cite{ZafaraniLiu2009}, and Konect~\cite{kunegis2013konect} etc. Multiple edges, self-loops are removed, and the directionality is ignored for directed graphs. Table~\ref{tab:datasets} lists their statistical information, where the first group (17 in total) are unweighted and the remaining are weighted. All the experiments are performed on a machine with 2.4GHz Intel(R) Xeon(R) CPU(8 cores) 
and 500GB of RAM. All baselines are implemented in C++ 14. The sizes of the networks range from 10 thousand to 10 million.

\begin{figure}[htbp]
    \centering
    \includegraphics[width=0.7\linewidth]{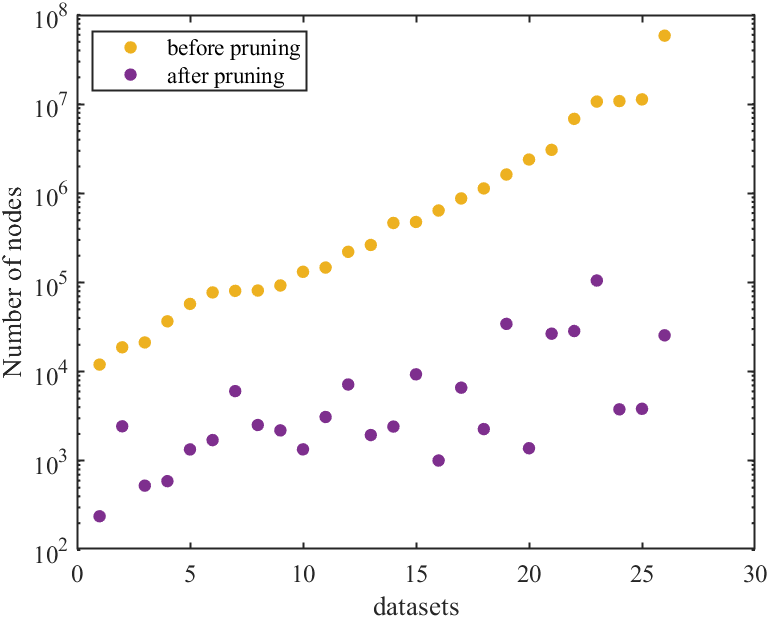}
    \caption{The pruning pre-process reduces the graph size significantly.}
    \label{fig:pruning}
\end{figure}

\subsection{ Effectiveness of The Pruning Technique}
We choose Max-Flow\cite{goldberg1984finding}, Greedy in priority tree(PT)\cite{hooi2016fraudar}, Greedy in doubly-linked list(DLL)\cite{boob2020flowless} and Greedy in Balanced Binary Search Tree(BBST)\cite{boob2020flowless} as baseline algorithms. PT, DL, BBST are different data structures used in Greedy, DLL can only be applied on unweighted graphs while BBST and PT can be applied on weighted graphs. We implement Max-Flow, Greedy in PT by ourselves, the code of Greedy in DLL and BBST is from \cite{boob2020flowless} because they optimize it enough well. 

Although Core-App\cite{fang2019efficient} can attain $k_{max}$-core efficiently which is also 1/2-approximation like Greedy, its density is no more than the density attained by Greedy because Greedy can choose the subgraph with the maximum density and $k_{max}$-core is attained at Greedy's some time according to Lemma \ref{lem:k-core}. Therefore, among approximation algorithms, we only select Greedy in different data structures because they can achieve
equivalent results.

Figure~\ref{fig:pruning} compares the size of graphs before and after the pruning pre-process. We can see that applying the pruning leads to about two orders of magnitude of reduction for the size of all networks. The effect of pruning is more obvious on larger graphs.

\begin{table*}[t]
    \centering
    \caption{Comparison between \emph{\method} and baselines.}
    \label{tab:lowd}
    \vspace{-0.1in}
    \resizebox{0.95\textwidth}{!}{
        \begin{tabular}{l|l|l|l|l|l|l|l|l|l|l|l}\toprule
        \multicolumn{2}{c}{}&\multicolumn{5}{|c|}{Running time}&\multicolumn{5}{c}{The number of iteration rounds}\\
        \midrule
        \textbf{Dataset} &
        \textbf{$|\mathcal{H}^{\prime}|$} & \textbf{LOWD} & \textbf{Greedy++} & \textbf{FW} & \textbf{FISTA} & \textbf{FW-M} & \textbf{LOWD} & \textbf{Greedy++} & \textbf{FW} & \textbf{FISTA} & \textbf{FW-M}\\
        \midrule
        ca-HepPh                 & 239          & 0.0019      & \first{0.0013}     & 0.0018        & 0.0122    & \second{0.0014}    & 1             & 1            & 1               & 1         & 1       \\
        comm-EmailEnron          & 592          & \first{0.0115}      & 0.0187     & 0.0645        & 0.2437    & \second{0.0163}    & 13            & 10           & 78              & 56       & 19      \\
        ca-AstroPh               & 2441         & \first{0.1202}      & \second{0.1759}     & 0.7566        & 2.2657    & 0.2125    & 36            & 28           & 345             & 183       & 83      \\
        PP-Pathways                    & 527          & \second{0.0089}      & \first{0.0029}     & 0.0354        & 0.1023    & 0.0241    & 6             & 1            & 24              & 18        & 16      \\
        soc-sign\_slashdot       & 1709         & \second{0.0255}      & \first{0.0045}     & 0.0807        & 0.3132    & 0.0361    & 10            & 1            & 42              & 27        & 18      \\
        soc-sign\_epinion        & 1345         & \first{0.0178}      & 0.0830      & 0.1309        & 0.2721    & \second{0.0700}      & 4             & 12           & 43              & 15        & 22      \\
        soc-Twitter\_ICWSM            & 2423         & \first{0.0944}      & 0.5038     & 4.9809        & 1.3091     & 0.9862    & 39            & 120          & 2820            & 115       & 573     \\
        rating-StackOverflow   & 1008         & \first{0.0199}      & 0.0666     & 0.1635        & 0.4072    & \second{0.0286}    & 24            & 45           & 272             & 125       & 44      \\
        ego-twitter                    & 2523         & \first{0.0929}      & 0.4272     & 1.9648        & 0.7245    & \second{0.0813}    & 13            & 46           & 478             & 25        & 19      \\
        soc-Youtube              & 2269         & \first{0.1077}      & 0.8683     & 1.7423        & 3.6148    & \second{0.3113}    & 29            & 127          & 608             & 181       & 106     \\
        comm-WikiTalk            & 1384         & 0.0099      & 0.0107     & \second{0.0094}        & 0.0412    & \first{0.0093}    & 1             & 1            & 1               & 1         & 1       \\
        nov\_user\_msg\_time          & 3779         & \second{3.026}       & 5.5366     & 12.2961       & 19.9690   & \first{2.1637}    & 96            & 72           & 436             & 91       & 76      \\
        cit-Patents\_AMINER            & 28546        & \first{2.3379}      & 29.3613    & 23.5359       & 44.4324   & \second{10.5380}    & 17            & 65           & 224             & 43        & 100     \\
        soc-Twitter\_ASU               & 3834         & 0.1189      & 0.2180      & \first{0.1034}        & 0.5913    & \second{0.1088}    & 1             & 1            & 1               & 1         & 1       \\
        soc-Livejournal & 105265       & \first{152.9460}     & 13795.667  & 24839.696     & 57567.91  & \second{1322.3124} & 437           & 8436         & 83549           & 17099     & 4615    \\
        soc-Orkut                & 26670        & \first{14.9269}     & 111.2410    & 1440.7987     & 1661.6384 & \second{86.1052}   & 71            & 150          & 8242            & 1063      & 495     \\
        soc-SinaWeibo          & 25556        & \first{36.0107}     & 308.7090    & 2694.3677     & 3020.8957 & \second{280.3085}  & 262           & 594          & 21266           & 2589      & 2263    \\
        \hline
        wang-tripadvisor                    & 3103         & \first{0.0708}      & 0.8382     & 1.7039        &    2.4322       & \second{0.6695}    & 121           & 187          & 4031            &    941        & 1403    \\
        rec-YelpUserBusiness  & 1343         & \first{0.0732}      & 1.5679     & 0.7741       &  3.7359          & \second{0.0987}    & 46            & 207          & 784             &   466        & 104     \\
        bookcrossing        & 1946         & \first{0.0703}      & 1.6941     & 1.0291        &    5.9866    & \second{0.1824}    & 75            & 278          & 1383            &  1516       & 250     \\
        librec-ciaodvd-review      & 2195         & \first{0.2090}       & 3.6041     & 14.9112       &    33.9615     & \second{0.2609}    & 49            & 139          & 4644            &   958     & 80      \\
        movielens-10m               & 6049         & 0.1322      & 0.5096     & \first{0.1245}        &  1.5077        & \second{0.1249}    & 1             & 1            & 1               &   1     & 1       \\
        epinions                     & 6616         & \first{1.2117}      & 140.5527   & 187.0548      &    268.6413    & \second{5.0036}    & 43            & 788          & 9081            &  1473     & 239     \\
        libimseti                           & 7179         & \first{1.1102}      & 247.5964   & 57.4351       &   384.2001    & \second{3.0025}    & 18            & 594          & 1142            &  912    & 59      \\
        rec-movielens             & 9335         & \second{8.8373}      & 839.4542   & 83.8077       &    635.9382     & \first{1.7434}    & 52            & 738          & 563             & 480       & 11      \\
        yahoo-song                          & 34352        & \second{155.6296}    & 11453.641  & 4144.8288     &    46170.844      & \first{51.6024}   & 154           & 1532         & 4959            &    4892    & 61      \\
        \hline
        \multicolumn{12}{l}{\multirow{1}{0.95\textwidth}{\textbf{note:} $|\mathcal{H}^{\prime}|$ denotes the size of the subgraph after the pruning, \textbf{FW} means Frank-Wolfe in \cite{danisch2017large}.}} \\
        \bottomrule
        \end{tabular}
    }
\end{table*}

We compare the running time of the approximation and exact algorithms on weighted and unweighted graphs in Table~\ref{tab:time}, where the results are the average of 5 trails. We report the density of detected subgraphs by approximation and exact algorithms in the appendix. The density obtained by the same type of algorithm is equivalent (We use $||$ to distinguish different types). It shows that the pruning can save a lot of time for subsequent algorithms, i.e., Greedy in different data structures and Max-Flow; the pruning achieves the best results (as the bold labeled) on both weighted graphs and unweighted graphs.

Columns 2 and 3 are the pruning used on weighted and unweighted graphs respectively (Of course an unweighted graph can be seen as a weighted graph with $w_e=1$ for any edge $e$). They prune the graph and lock the densest subgraph into a much smaller range.

Columns 4-6 show the results of approximation algorithms for all networks, we can find that PT is much faster than BBST. Although BBST has the same average time complexity as PT, i.e., $O(M \log N)$, it actually runs much slower than PT, especially on large datasets, which is attributed to the higher probability of a worse situation in BBST since deleting the node with the lowest weight iteratively in Greedy is not a balanced operation in BBST. Therefore, we just combine the pruning technique with PT, and it can detect the same subgraph faster,  which achieves about $3.563 \times$ on average by utilizing the pruning technique than using PT alone.

Although the time complexity of the pruning on weighted graphs is $O(M+TN)$ and we are not sure about the iteration count $T$, in our experiments, the number of deleted nodes approximately obeys the exponential law (in the appendix we plot 6 datasets to illustrate the exponential law). Therefore, $T$ is $O(\log N)$ according to experimental performance so the pruning is also efficient on weighted graphs, which can be confirmed with the results of columns 5 and 6, i.e., w\_Pruning+PT runs faster than using PT alone.

Columns 7-8 are the results of the exact algorithms. It shows that the pruning accelerates Max-Flow by $28.215 \times$ on average (ignoring the time-out datasets, although it performs better on large datasets), which can be attributed to two reasons: First, a higher density lower bound $\delta$ for the subsequent binary search. Second, a smaller graph needs to be explored next. These factors greatly reduce the number of binary-search iterations and the execution time of the max flow algorithm.

Columns 9-10 are the results of approximation algorithms for unweighted graphs. Greedy in doubly-linked list achieves $2.312 \times$ speedup on average by utilizing the pruning. Although the speed of uw\_Pruning is not faster than that of w\_Pruning evidently, it has an accurate time complexity, i.e., $O(M+N)$. By utilizing the pruning technique, we acelerate DLL about $2.312\times$ on average.

\subsection{ Effectiveness of \emph{\method} When Solving DSP}

\begin{figure*}[t]
    \centering
    \subfigure[soc-Twitter\_ICWSM]{\includegraphics[width=0.24\linewidth]{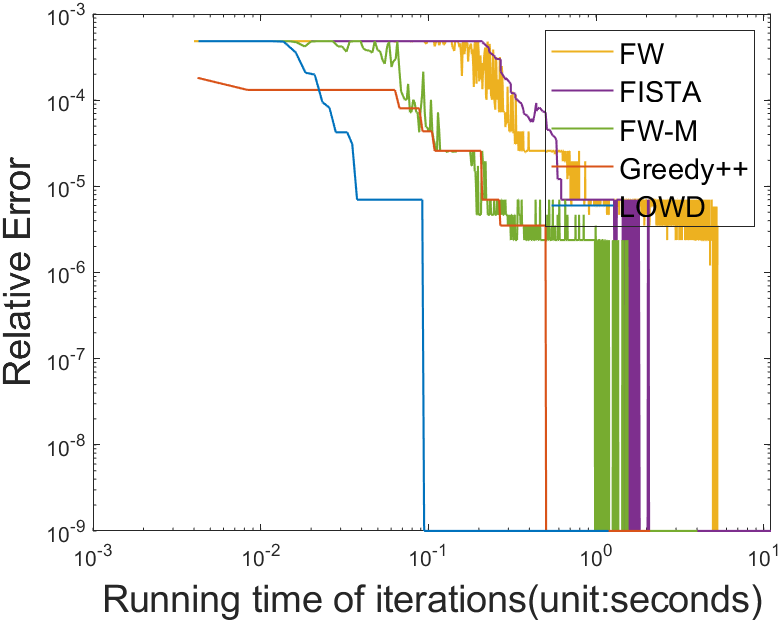}}
    \subfigure[ego-twitter]{\includegraphics[width=0.24\linewidth]{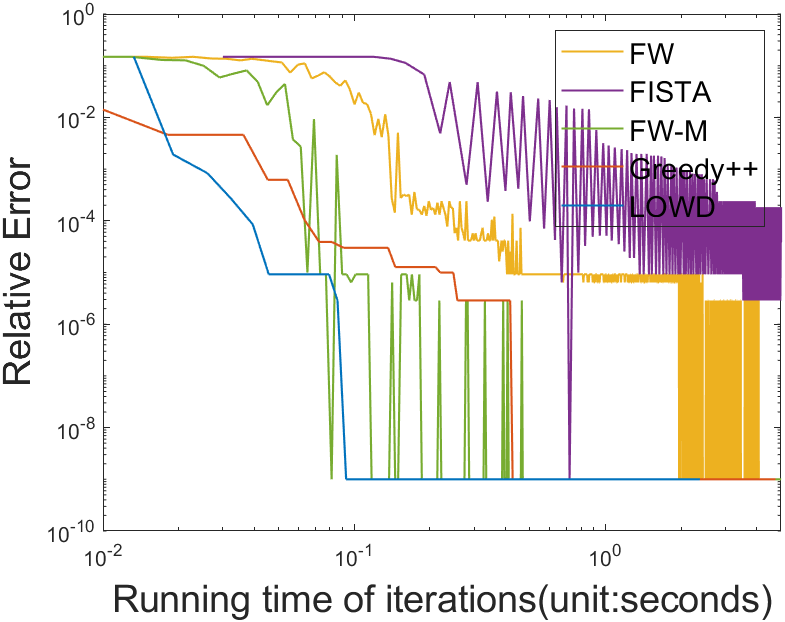}}
    \subfigure[soc-Youtube]{\includegraphics[width=0.24\linewidth]{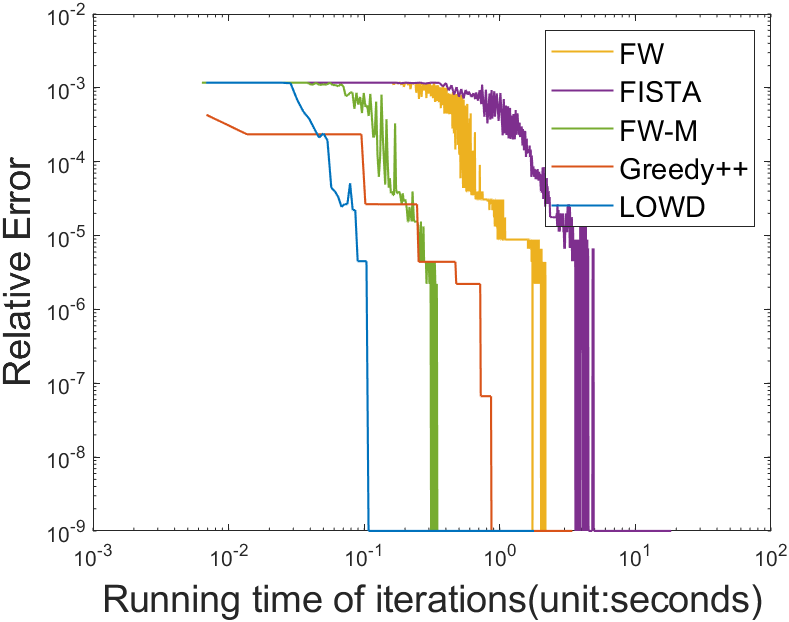}}
    \subfigure[soc-Livejournal]{\includegraphics[width=0.24\linewidth]{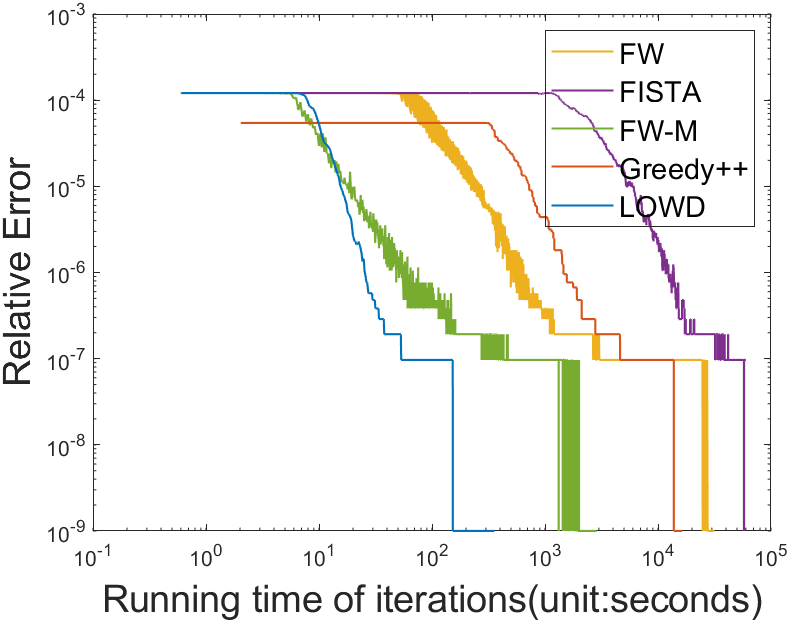}}
    \subfigure[soc-SinaWeibo]{\includegraphics[width=0.24\linewidth]{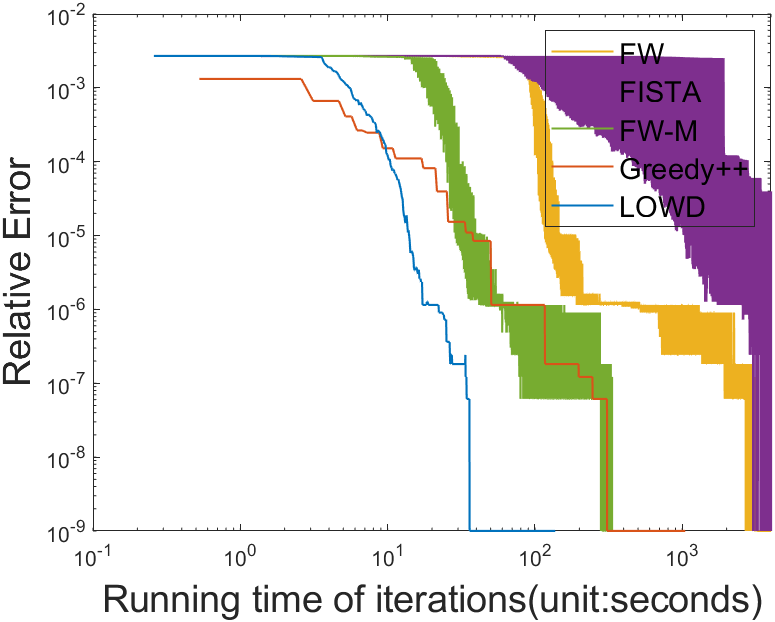}}
    \subfigure[wang-tripadvisor]{\includegraphics[width=0.24\linewidth]{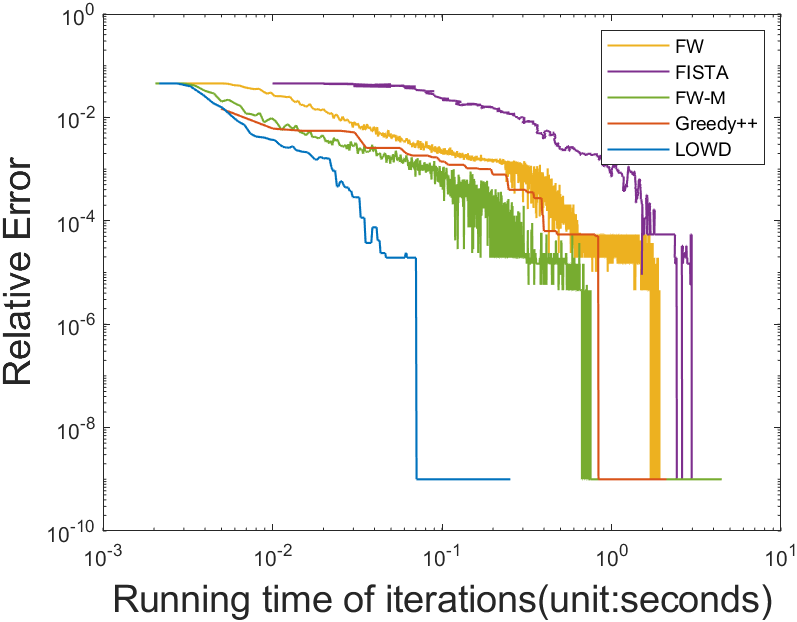}}
    \subfigure[bookcrossing]{\includegraphics[width=0.24\linewidth]{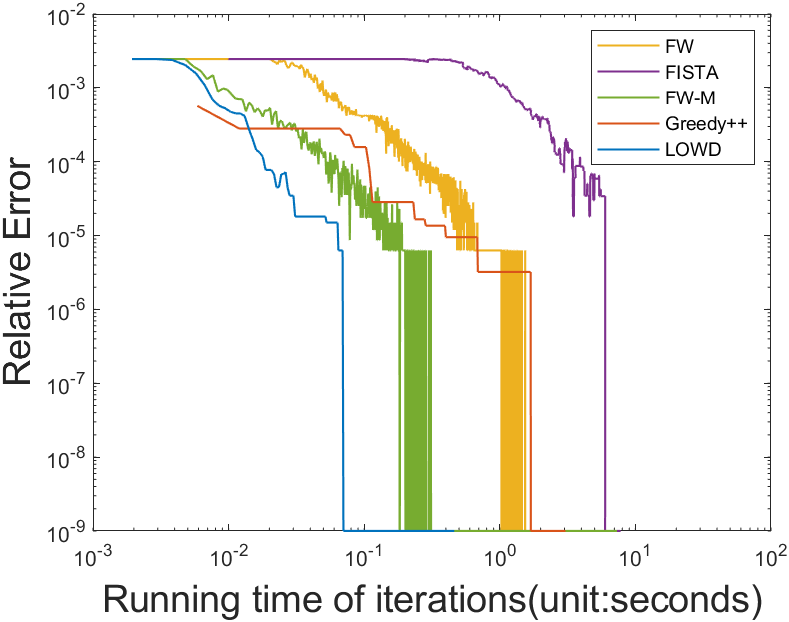}}
    \subfigure[yahoo-song]{\includegraphics[width=0.24\linewidth]{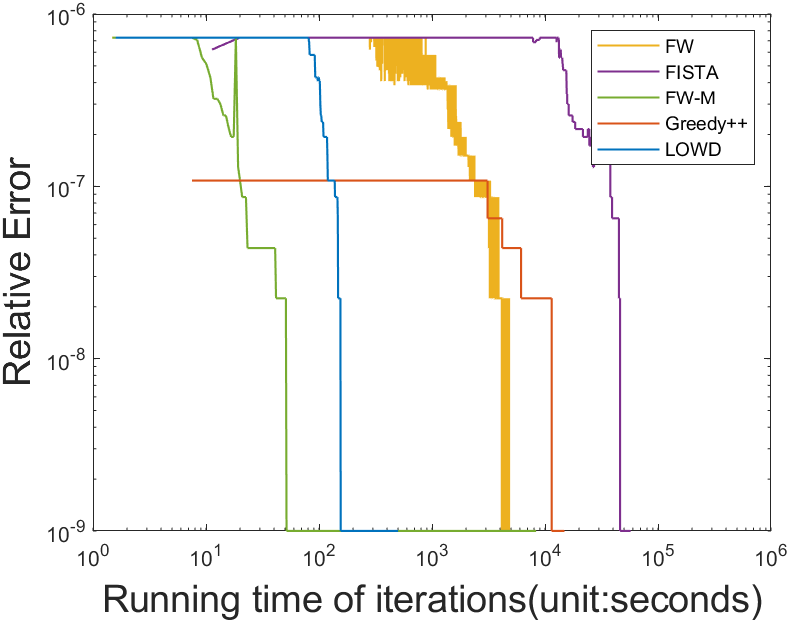}}
    \caption{Comparison on detecting the densest subgraph.}
    \label{fig:pruning_densest}
\end{figure*}

\begin{figure*}[htbp]
    \centering
    \subfigure[soc-Twitter\_ICWSM]{\includegraphics[width=0.24\linewidth]{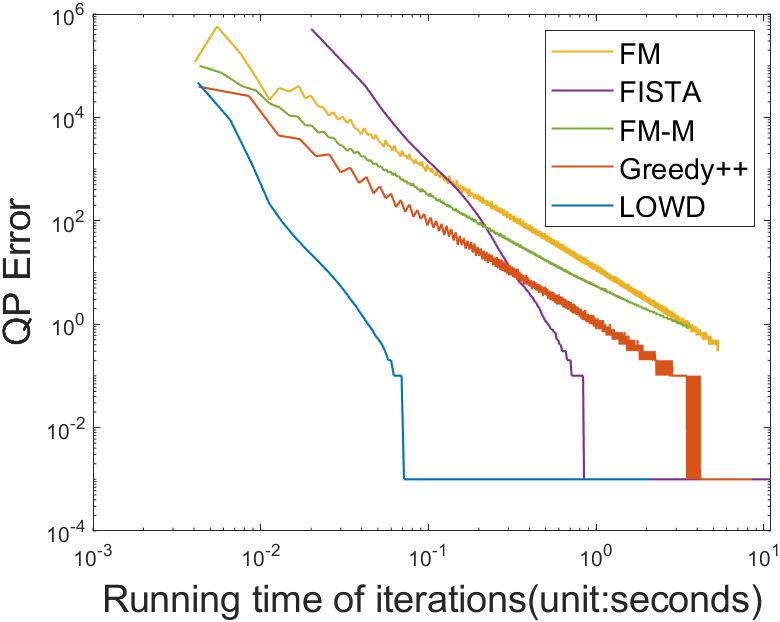}}
    \subfigure[ego-twitter]{\includegraphics[width=0.24\linewidth]{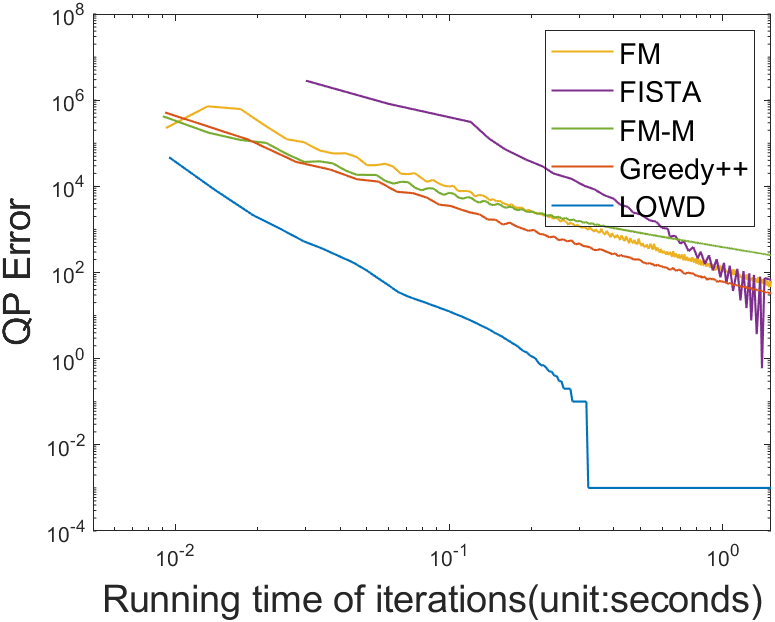}}
    \subfigure[soc-Youtube]{\includegraphics[width=0.24\linewidth]{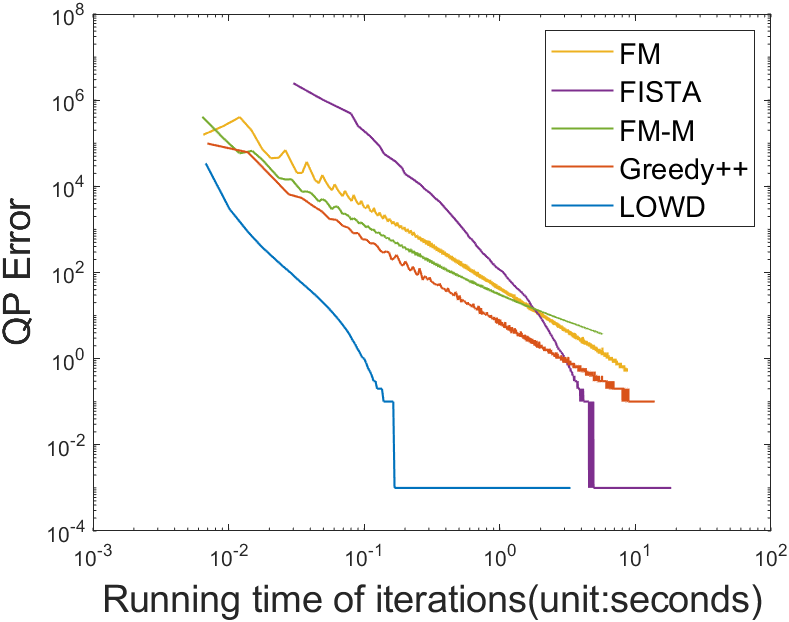}}
    \subfigure[soc-Livejournal]{\includegraphics[width=0.24\linewidth]{picture/pruning_LDD/soc-Youtube_SNAP.edgelist.png}}
    \subfigure[soc-SinaWeibo]{\includegraphics[width=0.24\linewidth]{picture/pruning_LDD/soc-Youtube_SNAP.edgelist.png}}
    \subfigure[wang-tripadvisor]{\includegraphics[width=0.24\linewidth]{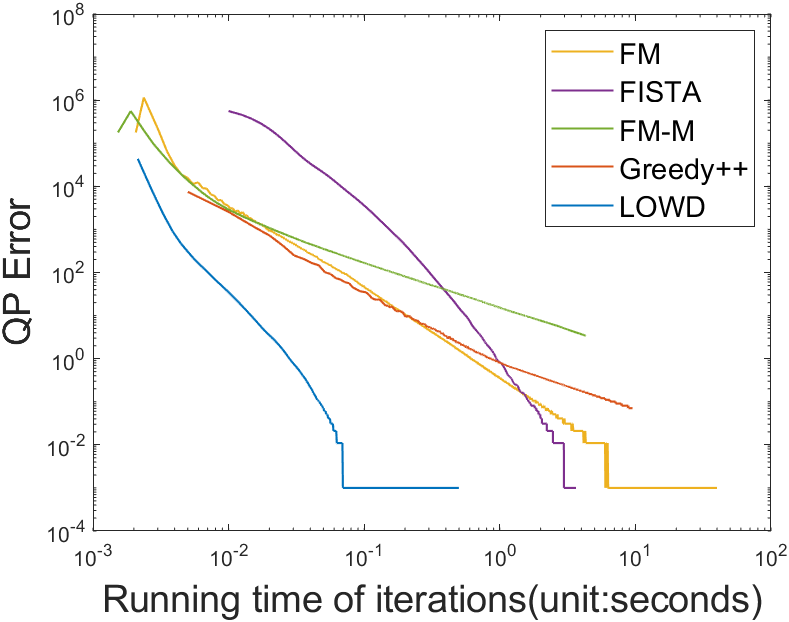}}
    \subfigure[bookcrossing]{\includegraphics[width=0.24\linewidth]{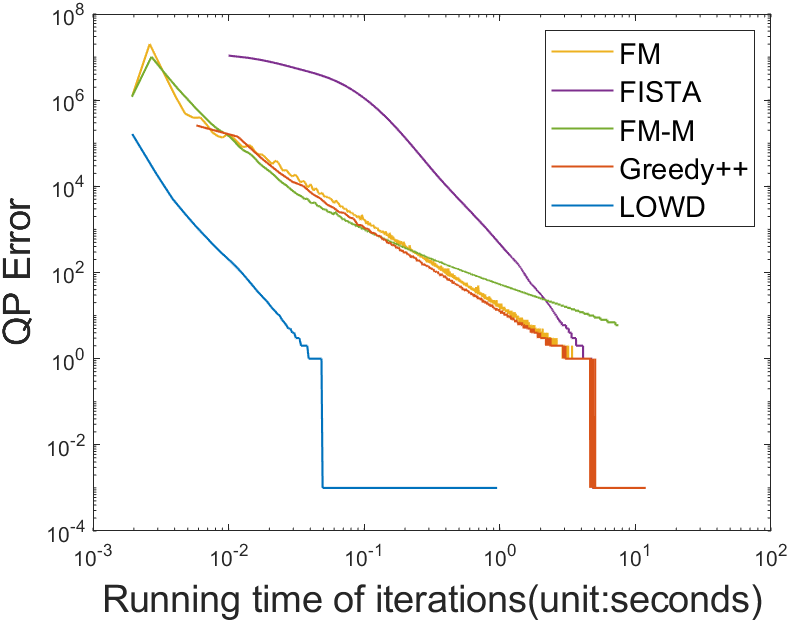}}
    \subfigure[yahoo-song]{\includegraphics[width=0.24\linewidth]{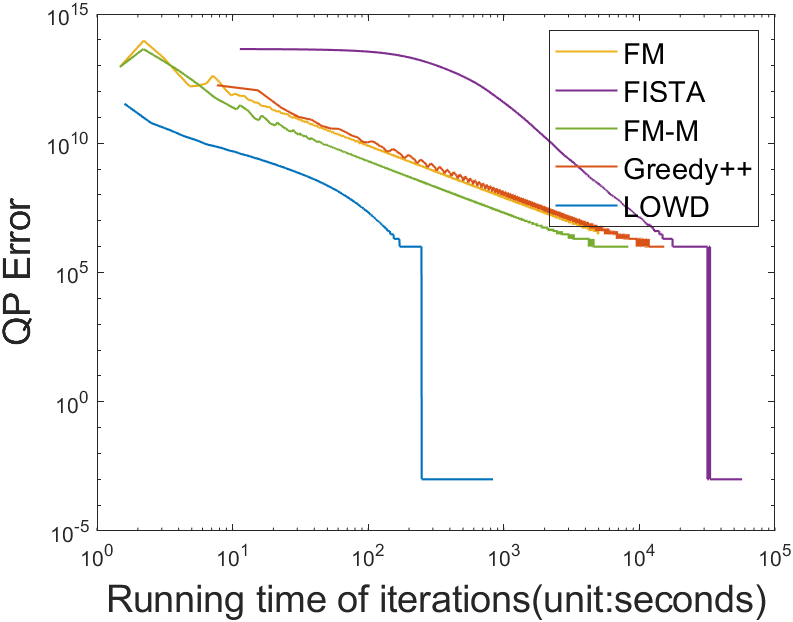}}
    \caption{Comparison on detecting the locally-dense decomposition.}
    \label{fig:pruning_ldd}
\end{figure*}

To fully demonstrate the efficiency of \emph{\method}, we compare it with the state-of-the-art iterative algorithms including Frank-Wolfe(FW) in \cite{danisch2017large}, Greedy++ in \cite{boob2020flowless} and FISTA in \cite{harb2022faster}, we also compare it with "Frank-Wolfe with a modified learning rate"(FW-M), the original learning-rate for Frank-Wolfe is $\frac{2}{(t+2)}$(t is the iteration number) while learning-rate in FW-M is $\frac{1}{(t+1)}$ inspired by \cite{harb2022faster}. FISTA and Greedy++ are implemented based on their own original codes, while Frank-Wolfe and its variant(FW-M) are implemented by ourselves. In this subsection, we want to figure out how well these iterative algorithms perform on detecting the densest subgraph and fully demonstrate the effectiveness of \emph{\method}. We devise our experiments as below:
\begin{compactitem}
    \item{\textbf{Comparison on detecting the densest subgraph.}} We execute it to figure out whether \emph{\method} can perform better than other iterative algorithms on detecting the densest subgraph.
    \item{\textbf{The relative error on the density as iterations count increases.}} We supplement it to exhibit the convergence speed of \emph{\method} and other iterative algorithms on 8 randomly selected datasets.
\end{compactitem}

All these experiments are executed with the pruning technique to pre-process these datasets. Corresponding experiments without the pruning are also executed whose results are listed in the appendix.

We initialize \emph{\method}, Frank-Wolfe, FISTA and FW-M by distributing the weight of each edge equally to its associated nodes like lines 1-4 in Algorithm \ref{alg:lowd}, which makes the node weight positively correlated with its degree, our intuitive consideration is that nodes with higher degrees should not be deleted first while it is unclear which node each edge should be distributed to. This sort method is also used in \cite{danisch2017large}.

In Table~\ref{tab:lowd}, we compared \emph{\method} and other baselines on the subgraphs after pruning in Algorithm \ref{alg:pruning} on unweighted and weighted graphs. The optimal result is bolded and underlined, while the suboptimal result is only underlined. It shows that \emph{\method} detects the densest subgraphs much faster than other iterative algorithms from the overall experimental performance, most of its results are optimal or suboptimal. Besides, Greedy++ is easier to detect the densest subgraph in only one iteration, which partly shows the effectiveness of Greedy. FW-M needs fewer iteration rounds than using its original learning-rate and FISTA needs more time to detect DSP in this sorting method(sorting node load and deleting the node with the lowest load one by one). However, it is worth mentioned that DSP is a subproblem of LDD. Actually FISTA has a better convergence speed than other baselines when solving LDD and FW-M doesn't converge faster than FW according to Figure \ref{fig:pruning_ldd}. And \cite{harb2022faster} devised a method called "fractional peeling" to detect the densest subgraph faster than the sorting method in our experiments. Related theory and experiments can be seen in \cite{harb2022faster}. To be accurate, FW-M suits the sorting method better while FW, FISTA don't suit it very much (which can be improved by "fractional peeling"). And our method \emph{\method} performs best both in DSP and LDD. The following paragraph will provide more detailed information.

Next, we randomly choose 8 datasets to exhibit the relative error of these algorithms as iterations count increases. The definition of relative error is:
$(\rho^*-\rho(\subnode))/\rho^*$. We use logarithmic coordinates to exhibit it more explicitly. We add a small amount \textit{eps=1e-9} to the relative error because when we detect the graph with density $\rho(\subnode)=\rho^*$, its relative error is 0 and doesn't appear on logarithmic coordinates. \emph{\method} has the optimal convergence speed on 7 datasets and the suboptimal convergence speed on 1 dataset, which is slower than FW-M. \emph{\method}'s relative error seldom vibrates because it monotonically decreases the optimization of LP \ref{eq:tighter_dual} and QP \ref{eq:decomposition} based on its local optimality. Greedy doesn't vibrate because it stores the best result of all iterations while the relative errors of Frank-Wolfe, FISTA and FW-M vibrate apparently because they are based on classical methods towards convex optimization like gradient descent and its variants. They don't ensure monotonic decline, which we will observe again in experiments about LDD.

\subsection{ Effectiveness of \emph{\method} When Solving LDD}

\begin{figure}[htbp]
    \centering
    \includegraphics[width=0.95\linewidth]{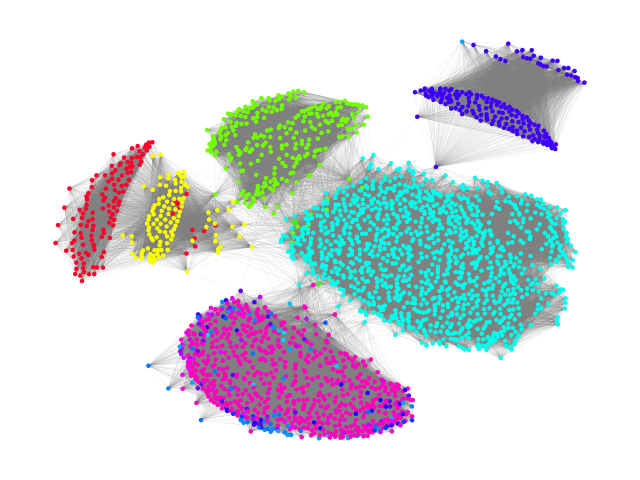}
    \vspace{-0.1in}
    \caption{LDD in ego-twitter (500 iterations of \emph{\method})}
    \label{fig:BDSS}
\end{figure}

We also conduct experiments about the convergence speed to the locally-dense decomposition in Figure \ref{fig:pruning_ldd}. The definition of QP error on the y-axis is: $\sum_{v \in \nodes}{{\ell_{v}}^{2}}-\sum_{v \in \nodes}{{\ell_{v}^{*}}^{2}}$, where $\sum_{v \in \nodes}{{\ell_{v}}^{2}}$ is attained by these iterative algorithms and $\sum_{v \in \nodes}{{\ell_{v}^{*}}^{2}}$ is the optimal value of QP \eqref{eq:decomposition}. We use the minimum $\sum_{v \in \nodes}{{\ell_{v}}^{2}}$ attained by these algorithms to represent the optimal value and add a small amount \textit{eps=1e-3} to QP error similarly.

From Figure \ref{fig:pruning_ldd}, we can see that \emph{\method} optimize $\sum_{v \in \nodes}{{\ell_{v}}^{2}}$ much faster than any other iterative algorithms because the x-axis is logarithmic coordinate. We conclude this to the local optimality of \emph{\method} in QP \eqref{eq:decomposition}. FISTA performs better on LDD than FW-M, Frank-Wolfe and Greedy++, which is consistent with the conclusion in \cite{harb2022faster}. 
In Figure \ref{fig:pruning_ldd} all baselines will vibrate to some degree, while our method \emph{\method} can optimize QP \eqref{eq:decomposition} monotonically.

To observe the locally-dense decomposition, we run \emph{\method} 500 iterations on \textit{ego-twitter} after pruning to achieve Figure \ref{fig:BDSS} as an example. We use colormap $gist\_rainbow$ in $matplotlib$ to dye nodes according to their loads, therefore, nodes with the same loads should have the same color. In Figure \ref{fig:BDSS} red means the minimum node load and violet means the maximum node load. And the nodes with maximum loads represent $B_1$ and the nodes with the second largest loads represent $B_{1}\setminus B_{2}$. We can really observe the hierarchical phenomenon on node loads in locally-dense decomposition. Besides, we checked some edges with $e=(u,v)$ and $u$,$v$ have different node loads, and these edge weights are all distributed in one-way to endpoints with lower loads. Another interesting discovery is that the node position is based on the Kamada-Kawai path-length in $matplotlib$, we can see nodes with similar loads are close to each other, which also implies that Kamada-Kawai path-length has connections with locally-dense decomposition.

By observing the locally-dense decomposition, it not only helps to detect different variants of the densest subgraph like DkS in Corollary \ref{coroll:dks} and locally densest subgraph in \cite{ma2022finding}, but also helps us to understand the whole graph about the theme of "edge-density". Given that we get the densest subgraph by distributing edge weights, we can understand the reason why some nodes with high degrees are not in the densest subgraph is that they have to distribute some edge weights to their neighbors with low loads. The influence is transmitted along edges, so the loads of two nodes can be influenced by each other as long as they are in the same connected branch. The previous work \cite{veldt2021generalized} has found this phenomenon, they generalized the Greedy algorithm of \cite{charikar2000greedy} with a parameter $p$ (the Greedy algorithm is equal to the generalized form with $p=1$) and set $p=1.05$ to spread the influence of node neighbors slightly, which helps to detect denser subgraphs than Greedy.

\section{Related work}
    \label{sec:related}
The most recent surveys~\cite{lanciano2023survey,luo2023survey} present a systematic, thorough overview and
summarization of the densest subgraph problem, and the tutorials~\cite{Gionis2015DSD, fang2022densest} also
give a comprehensive survey of the discovery of the densest subgraphs on large graphs and 
discuss the challenges and various applications.

For the graph with non-negative edge weights, the densest subgraph can be identified in polynomial time by 
solving a maximum flow problem~\cite{goldberg1984finding, gallo1989fast, khuller2009finding}; 
Charikar~\cite{charikar2000greedy} introduces a linear programming formulation of the problem and shows that 
the greedy algorithm proposed by Asashiro et al.~\cite{asahiro2000greedily} produces 
a $\frac{1}{2}$-approximation of the optimal density in linear time. 
\cite{danisch2017large} devises an efficient algorithm via convex programming, which can compute the exact locally-dense decomposition 
in real graphs with billions of edges, and proposes an $(1+\epsilon)$-approximation solution based on the Frank-Wolfe algorithm. 
Boob et al.~\cite{boob2020flowless} developed a simple iterative peeling algorithm Greedy++ to improve the quality of the subgraph over 
Charikar's greedy algorithm by drawing insights from the iterative approaches (multiplicative weights update) of convex optimization; 
the history of peeling information of nodes will help to escape the local solution to some extent. 
\cite{sawlani2020near} provided an algorithm for maintaining an $(1 - \epsilon)$-approximate (arbitrarily close to 1) densest subgraph 
within $O(\mathrm{poly}\log n)$ time over dynamic directed graphs, and extended to solve the problem on vertex-weighted static graphs. 
Feng et al.~\cite{feng2021specgreedy} proposed a generalized framework for addressing DSP and related 
problems~\cite{hooi2016fraudar, miyauchi2018finding, anagnostopoulos2020spectral, Tsourakakis2019NovelDS} and introduced SpecGreedy, 
an algorithm that leverages the graph spectral properties to a greedy peeling strategy to solve the generalized problem and speed up the detection. 
Chekuri et al.~\cite{chekuri2022densest} exploited the supermodular maximization and proposed more efficient $(1 - \epsilon)$-approximation algorithms in deterministic $\tilde{O}(m / \epsilon)$ time via approximate flow techniques for DSP, 
and gives evidence of the convergence and theoretical truthfulness of Greedy++, that is, 
it can converge to a $(1-\epsilon)$-approximation in $O(1/\epsilon^2)$ iterations; 
it also developed an $\frac{1}{2}$-approximation peeling algorithm for the densest-at-least-k subgraph. 
\cite{fazzone2022discovering} modified Greedy++ to have a quantitative certificate of the solution quality provided by the algorithm at each iteration. 
\cite{harb2022faster} proposed another iterative method using Proximal Gradient Method, which achieves $(1-\epsilon)$-approximation in $O(1/\epsilon)$, they also proposed a technique called Fractional Peeling to make use of the information in edge distribution. 
For directed graphs, the LP-based approach proposed by Charikar~\cite{charikar2000greedy} 
requires the computation of $n^2$ linear programs, and the $\frac{1}{2}$-approximation runs $O(n^3 + mn^2)$ time, 
\cite{khuller2009dense} provided more efficient implementations for these algorithms for undirected and directed graphs.

There is another research line that discovers the densest subgraph building upon some microstructures (motifs) in a graph, 
including the triangles~\cite{tsourakakis2015k, samusevich2016local}, cliques~\cite{sun2020kclist++, Tsourakakis2013Denser, fang2019efficient}, 
$k$-core~\cite{galimberti2017core}, $k$-club / $k$-plex, etc., and proposed corresponding different variants for the density measures. 
\cite{Mitzenmacher2015SLN, tsourakakis2015k} extended the DSP to the $k$-clique, and the $(p,q)$-biclique densest subgraph problems, 
which can be used to find large near-cliques. 
Tatti and Gionis~\cite{tatti2015density,tatti2019density} introduced the locally-dense graph decomposition method, 
which imposes certain insightful constraints on the $k$-core decomposition. 
\cite{fang2019efficient} proposed exact and approximate solutions by improving the flow-based exact algorithm by 
locating the densest subgraph in a specific $k$-core, which can be generalized 
by considering an arbitrary pattern graph and aiming to maximize the average number of occurrences of the pattern in the resulting subgraph. 
\cite{ma2020efficient} proposed $[x,y]$-core-based algorithms (both exact and approximation) 
with the divide-and-conquer strategy to find the densest subgraph for directed graphs.

When restrictions on the size of nodeset are imposed, the DSP also becomes NP-hard~\cite{andersen2009finding}, 
which is called the \textit{densest k-subgraph} (DkS). Its two variants called densest \textit{at-least-k subgraph} (DalkS) 
and densest \textit{at-most-k subgraph} (DamkS) are also NP-hard according to \cite{khuller2009finding,andersen2009finding}.

The dense subgraphs are used to detect communities~\cite{Chen2010Dense, costa2015milp, wong2018sdregion} and 
anomalies~\cite{prakash2010eigenspokes, beutel2013copycatch, hooi2016fraudar}. 
As one of the key characteristics, density, as well as other similar metrics like modularity~\cite{newman2006modularity}, 
associativity, and local density~\cite{qin2015locally}, are used as (part of) optimization objectives to detect community structures. 

\section{Conclusion and Future Outlook}
    \label{sec:con}
In this paper, we propose \emph{\method} to detect the densest subgraph and prove it can converge to locally-dense decomposition. \emph{\method} redistributes edge weight in a locally optimal operation according to the linear programming of the DSP and quadratic programming of locally-dense decomposition. Besides, we develop a pruning technology using modified Counting Sort and prove that it is a subprocess of Greedy. We did a lot of experiments to exhibit its pruning efficiency on 26 real-world datasets and compare it with other algorithms about Greedy and Max-Flow. We also use it to prune the graph to speed up the iterative algorithms. In our experiments, \emph{\method} can converge to the optimal values both in the linear programming of the DSP and quadratic programming of locally-dense decomposition faster than other state-of-the-arts iterative algorithms, including Frank-Wolfe in \cite{danisch2017large}, Greedy++ in \cite{boob2020flowless} and FISTA in \cite{harb2022faster} and FW-M.

Through our study, there are far more interesting topics in DSP and LDD that we can study in future work. We list them as follows:\\
\vspace{-0.15in}
\begin{enumerate}[label={\arabic*.}]
    \item What is the relationship between the size of the graph after pruning and the size of the whole graph (it's possibly related to properties of Kronecker graphs in \cite{leskovec2010kronecker}, which concerns how graphs evolve)?
    \item In our experiments, Kamada-Kawai path-length has a close relationship with LDD because we use Kamada-Kawai path-length to decide the position of nodes. How can we theoretically analyze the relationship between these two things?
    \item Can we discover the relation between iteration count $T$ and the approximate ratio?
\end{enumerate}

These problems also show that there is a close relationship among DSP, locally-dense decomposition and other theories in the graph field like DkS and locally densest subgraph in \cite{ma2022finding}. Understanding the relationship among them can help us mine dense subgraphs in more efficient and meaningful ways, and we can use dense subgraphs to discover more characteristics in the graph.

\section{Acknowledgments}
    We would like to express our sincere gratitude to Mr. Harb Elfarouk\cite{harb2022faster} for his invaluable contributions to this research. Mr. Elfarouk generously provided us with his algorithm and its codes, which formed the foundation of our work. His willingness to answer our questions and share his expertise played a crucial role in this study.
    

\begin{thebibliography}{100}
    \bibitem{shen2010spectral}
H.-W. Shen and X.-Q. Cheng, ``Spectral methods for the detection of network
  community structure: a comparative analysis,'' \emph{JSTAT}, 2010.

\bibitem{wong2018sdregion}
S.~W. Wong, C.~Pastrello, M.~Kotlyar, C.~Faloutsos, and I.~Jurisica,
  ``Sdregion: Fast spotting of changing communities in biological networks,''
  in \emph{SIGKDD'18}, 2018.

\bibitem{liu2019coupled}
Y.~Liu, L.~Zhu, P.~Szekely, A.~Galstyan, and D.~Koutra, ``Coupled clustering of
  time-series and networks,'' in \emph{SDM}.\hskip 1em plus 0.5em minus
  0.4em\relax SIAM, 2019.

\bibitem{cohen2003reachability}
E.~Cohen, E.~Halperin, H.~Kaplan, and U.~Zwick, ``Reachability and distance
  queries via 2-hop labels,'' \emph{SIAM Journal on Computing}, vol.~32, no.~5,
  pp. 1338--1355, 2003.

\bibitem{jin20093}
R.~Jin, Y.~Xiang, N.~Ruan, and D.~Fuhry, ``3-hop: a high-compression indexing
  scheme for reachability query,'' in \emph{Proceedings of the 2009 ACM SIGMOD
  International Conference on Management of data}, 2009, pp. 813--826.

\bibitem{goldberg1984finding}
A.~V. Goldberg, ``Finding a maximum density subgraph,'' 1984.

\bibitem{charikar2000greedy}
M.~Charikar, ``Greedy approximation algorithms for finding dense components in
  a graph,'' in \emph{APPROX'00}, 2000.

\bibitem{danisch2017large}
M.~Danisch, T.-H.~H. Chan, and M.~Sozio, ``Large scale density-friendly graph
  decomposition via convex programming,'' in \emph{WWW}, 2017.

\bibitem{sawlani2020near}
S.~Sawlani and J.~Wang, ``Near-optimal fully dynamic densest subgraph,'' in
  \emph{Proceedings of the 52nd Annual ACM SIGACT Symposium on Theory of
  Computing}, 2020, pp. 181--193.

\bibitem{boob2020flowless}
D.~Boob, Y.~Gao, R.~Peng, S.~Sawlani, C.~Tsourakakis, D.~Wang, and J.~Wang,
  ``Flowless: Extracting densest subgraphs without flow computations,'' in
  \emph{Proceedings of The Web Conference 2020}, 2020, pp. 573--583.

\bibitem{chekuri2022densest}
C.~Chekuri, K.~Quanrud, and M.~R. Torres, ``Densest subgraph: Supermodularity,
  iterative peeling, and flow,'' in \emph{Proceedings of the 2022 Annual
  ACM-SIAM Symposium on Discrete Algorithms (SODA)}.\hskip 1em plus 0.5em minus
  0.4em\relax SIAM, 2022, pp. 1531--1555.

\bibitem{harb2022faster}
E.~Harb, K.~Quanrud, and C.~Chekuri, ``Faster and scalable algorithms for
  densest subgraph and decomposition,'' \emph{Advances in Neural Information
  Processing Systems}, vol.~35, pp. 26\,966--26\,979, 2022.

\bibitem{khuller2009finding}
S.~Khuller and B.~Saha, ``On finding dense subgraphs,'' in \emph{Automata,
  Languages and Programming: 36th International Colloquium, ICALP 2009, Rhodes,
  Greece, July 5-12, 2009, Proceedings, Part I 36}.\hskip 1em plus 0.5em minus
  0.4em\relax Springer, 2009, pp. 597--608.

\bibitem{tatti2015density}
N.~Tatti and A.~Gionis, ``Density-friendly graph decomposition,'' in
  \emph{WWW}, 2015.

\bibitem{ma2022finding}
C.~Ma, R.~Cheng, L.~V. Lakshmanan, and X.~Han, ``Finding locally densest
  subgraphs: a convex programming approach,'' \emph{Proceedings of the VLDB
  Endowment}, vol.~15, no.~11, pp. 2719--2732, 2022.

\bibitem{tatti2019density}
N.~Tatti, ``Density-friendly graph decomposition,'' \emph{ACM Transactions on
  Knowledge Discovery from Data (TKDD)}, vol.~13, no.~5, pp. 1--29, 2019.

\bibitem{fang2019efficient}
Y.~{Fang}, K.~{Yu}, R.~{Cheng}, L.~V. {Lakshmanan}, and X.~{Lin}, ``Efficient
  algorithms for densest subgraph discovery,'' \emph{arXiv: Databases}, 2019.

\bibitem{bibby1974axiomatisations}
J.~Bibby, ``Axiomatisations of the average and a further generalisation of
  monotonic sequences,'' \emph{Glasgow Mathematical Journal}, vol.~15, no.~1,
  pp. 63--65, 1974.

\bibitem{jure2014snapnets}
J.~Leskovec and A.~Krevl, ``{SNAP Datasets}: {Stanford} large network dataset
  collection,'' \url{http://snap.stanford.edu/data}, Jun. 2014.

\bibitem{wan2019aminer}
H.~Wan, Y.~Zhang, J.~Zhang, and J.~Tang, ``Aminer: Search and mining of
  academic social networks,'' \emph{Data Intelligence}, 2019.

\bibitem{nr2015aaai}
R.~A. Rossi and N.~K. Ahmed, ``The network data repository with interactive
  graph analytics and visualization,'' in \emph{AAAI}, 2015. [Online].
  Available: \url{http://networkrepository.com}

\bibitem{ZafaraniLiu2009}
R.~Zafarani and H.~Liu, ``Social computing data repository at {ASU},'' 2009.
  [Online]. Available: \url{http://socialcomputing.asu.edu}

\bibitem{kunegis2013konect}
J.~Kunegis, ``Konect: the koblenz network collection,'' in \emph{WWW}, 2013,
  pp. 1343--1350.

\bibitem{hooi2016fraudar}
B.~Hooi, H.~A. Song, A.~Beutel, N.~Shah, K.~Shin, and C.~Faloutsos, ``Fraudar:
  Bounding graph fraud in the face of camouflage,'' in \emph{SIGKDD}, 2016.

\bibitem{veldt2021generalized}
N.~Veldt, A.~R. Benson, and J.~Kleinberg, ``The generalized mean densest
  subgraph problem,'' in \emph{Proceedings of the 27th ACM SIGKDD Conference on
  Knowledge Discovery \& Data Mining}, 2021, pp. 1604--1614.

\bibitem{lanciano2023survey}
T.~Lanciano, A.~Miyauchi, A.~Fazzone, and F.~Bonchi, ``A survey on the densest
  subgraph problem and its variants,'' \emph{arXiv preprint arXiv:2303.14467},
  2023.

\bibitem{luo2023survey}
W.~Luo, C.~Ma, Y.~Fang, and L.~V. Lakshman, ``A survey of densest subgraph
  discovery on large graphs,'' \emph{arXiv preprint arXiv:2306.07927}, 2023.

\bibitem{Gionis2015DSD}
A.~Gionis and C.~E. Tsourakakis, ``Dense subgraph discovery: Kdd 2015
  tutorial,'' in \emph{KDD}, 2015.

\bibitem{fang2022densest}
Y.~Fang, W.~Luo, and C.~Ma, ``Densest subgraph discovery on large graphs:
  applications, challenges, and techniques,'' \emph{Proceedings of the VLDB
  Endowment}, vol.~15, no.~12, pp. 3766--3769, 2022.

\bibitem{gallo1989fast}
G.~Gallo, M.~D. Grigoriadis, and R.~E. Tarjan, ``A fast parametric maximum flow
  algorithm and applications,'' \emph{SIAM Journal on Computing}, vol.~18,
  no.~1, pp. 30--55, 1989.

\bibitem{asahiro2000greedily}
Y.~Asahiro, K.~Iwama, H.~Tamaki, and T.~Tokuyama, ``Greedily finding a dense
  subgraph,'' \emph{Journal of Algorithms}, 2000.

\bibitem{feng2021specgreedy}
W.~Feng, S.~Liu, D.~Koutra, H.~Shen, and X.~Cheng, ``Specgreedy: unified dense
  subgraph detection,'' in \emph{Machine Learning and Knowledge Discovery in
  Databases: European Conference, ECML PKDD 2020, Ghent, Belgium, September
  14--18, 2020, Proceedings, Part I}.\hskip 1em plus 0.5em minus 0.4em\relax
  Springer, 2021, pp. 181--197.

\bibitem{miyauchi2018finding}
A.~Miyauchi and N.~Kakimura, ``Finding a dense subgraph with sparse cut,'' in
  \emph{CIKM}, 2018.

\bibitem{anagnostopoulos2020spectral}
A.~Anagnostopoulos, L.~Becchetti, A.~Fazzone, C.~Menghini, and
  C.~Schwiegelshohn, ``Spectral relaxations and fair densest subgraphs,'' in
  \emph{CIKM}, 2020.

\bibitem{Tsourakakis2019NovelDS}
C.~E. Tsourakakis, T.~Chen, N.~Kakimura, and J.~W. Pachocki, ``Novel dense
  subgraph discovery primitives: Risk aversion \& exclusion queries,''
  \emph{ECML-PKDD}, 2019.

\bibitem{fazzone2022discovering}
A.~Fazzone, T.~Lanciano, R.~Denni, C.~E. Tsourakakis, and F.~Bonchi,
  ``Discovering polarization niches via dense subgraphs with attractors and
  repulsers,'' \emph{Proceedings of the VLDB Endowment}, vol.~15, no.~13, pp.
  3883--3896, 2022.

\bibitem{khuller2009dense}
S.~Khuller and B.~Saha, ``On finding dense subgraphs,'' in \emph{Proceedings of
  the 36th International Colloquium on Automata, Languages and Programming:
  Part I}, ser. ICALP ’09.\hskip 1em plus 0.5em minus 0.4em\relax
  Springer-Verlag, 2009.

\bibitem{tsourakakis2015k}
C.~Tsourakakis, ``The k-clique densest subgraph problem,'' in \emph{WWW}, 2015.

\bibitem{samusevich2016local}
R.~Samusevich, M.~Danisch, and M.~Sozio, ``Local triangle-densest subgraphs,''
  in \emph{2016 IEEE/ACM International Conference on Advances in Social
  Networks Analysis and Mining (ASONAM)}.\hskip 1em plus 0.5em minus
  0.4em\relax IEEE, 2016, pp. 33--40.

\bibitem{sun2020kclist++}
B.~Sun, M.~Danisch, T.~Chan, and M.~Sozio, ``Kclist++: A simple algorithm for
  finding k-clique densest subgraphs in large graphs,'' \emph{Proceedings of
  the VLDB Endowment (PVLDB)}, 2020.

\bibitem{Tsourakakis2013Denser}
C.~Tsourakakis, F.~Bonchi, A.~Gionis, F.~Gullo, and M.~Tsiarli, ``Denser than
  the densest subgraph: extracting optimal quasi-cliques with quality
  guarantees,'' in \emph{SIGKDD}, 2013, pp. 104--112.

\bibitem{galimberti2017core}
E.~Galimberti, F.~Bonchi, and F.~Gullo, ``Core decomposition and densest
  subgraph in multilayer networks,'' in \emph{CIKM}, 2017.

\bibitem{Mitzenmacher2015SLN}
M.~Mitzenmacher, J.~Pachocki, R.~Peng, C.~Tsourakakis, and S.~C. Xu, ``Scalable
  large near-clique detection in large-scale networks via sampling,'' in
  \emph{KDD}, 2015.

\bibitem{ma2020efficient}
C.~Ma, Y.~Fang, R.~Cheng, L.~V. Lakshmanan, W.~Zhang, and X.~Lin, ``Efficient
  algorithms for densest subgraph discovery on large directed graphs,'' in
  \emph{SIGMOD}, 2020.

\bibitem{andersen2009finding}
R.~Andersen and K.~Chellapilla, ``Finding dense subgraphs with size bounds,''
  in \emph{WAW'09}.

\bibitem{Chen2010Dense}
J.~Chen and Y.~Saad, ``Dense subgraph extraction with application to community
  detection,'' \emph{IEEE TKDE}, 2010.

\bibitem{costa2015milp}
A.~Costa, ``Milp formulations for the modularity density maximization
  problem,'' \emph{European Journal of Operational Research}, 2015.

\bibitem{prakash2010eigenspokes}
B.~A. Prakash, A.~Sridharan, M.~Seshadri, S.~Machiraju, and C.~Faloutsos,
  ``Eigenspokes: Surprising patterns and scalable community chipping in large
  graphs,'' in \emph{PAKDD}, 2010.

\bibitem{beutel2013copycatch}
A.~Beutel, W.~Xu, V.~Guruswami, C.~Palow, and C.~Faloutsos, ``Copycatch:
  stopping group attacks by spotting lockstep behavior in social networks,'' in
  \emph{WWW}, 2013, pp. 119--130.

\bibitem{newman2006modularity}
M.~E. Newman, ``Modularity and community structure in networks,''
  \emph{Proceedings of the national academy of sciences}, 2006.

\bibitem{qin2015locally}
L.~Qin, R.-H. Li, L.~Chang, and C.~Zhang, ``Locally densest subgraph
  discovery,'' in \emph{SIGKDD}, 2015, pp. 965--974.

\bibitem{leskovec2010kronecker}
J.~Leskovec, D.~Chakrabarti, J.~Kleinberg, C.~Faloutsos, and Z.~Ghahramani,
  ``Kronecker graphs: an approach to modeling networks.'' \emph{Journal of
  Machine Learning Research}, vol.~11, no.~2, 2010.

\bibitem{seidman1983network}
S.~B. Seidman, ``Network structure and minimum degree,'' \emph{Social
  networks}, vol.~5, no.~3, pp. 269--287, 1983.

\bibitem{zhu2023fast}
Y.~{Zhu}, S.~{Liu}, W.~{Feng}, X.~{Cheng},``Fast Searching The Densest Subgraph And Decomposition With Local Optimality,'' \emph{arXiv preprint arXiv: 2307.15969}, 2023.
\end{thebibliography}

\newpage
\section{Full version}
\begin{proof}[Proof of Corollary \ref{coroll:dks}]
     For DkS, we set $|\subnode|=k$, it has an upper bound as follows:
        \begin{equation*}
            \footnotesize
            \begin{aligned}
                \rho( \subnode) &=\frac{\sum_{e \in \edges(\subnode)} w_e}{|\subnode|}=\frac{\sum_{e=(u,v),\,u,v\in \subnode} f_e(u)+f_e(v)}{|\subnode|}\\
                &\le\frac{\sum_{e=(u,v),\,u,v\in \subnode}{(f_e(u)+f_e(v))}+\sum_{e=(u,v),\,u\in \subnode,v\not\in \subnode}{f_e(u)}}{|\subnode|}\\
                &=\frac{\sum_{u\in \subnode}{\sum_{e\ni u}{f_e(u)}}}{|\subnode|}=\frac{\sum_{u\in \subnode}{l_u}}{|\subnode|}\\
                &=\frac{\sum_{i=0}^{j-1}{\lambda_{i}*|B_i|}+(k-|B_{j-1}|)*\lambda_{j}}{k}\\
            \end{aligned}
        \end{equation*}
    Therefore, result 2 holds up. As for result 1, if $k=|B_j|$, according to property \ref{prop:one-way} the equivalency condition in line 2 holds up because $f_e(u)=0$ if $u\in\subnode$ and $v\not\in\subnode$, then result 1 holds up. These
    results also hold up for DalkS because if $|S|>k$, it will add more nodes into $\subnode$ with lower upper bounds on their loads.
\end{proof}

Before proofs of Lemma \ref{lem:k-core}, we introduce the definition of $k$-core from \cite{seidman1983network}: $k$-core is the maximal subgraph $G_k$ in graph G, the degree of where any vertex $v$ in $G_k$ is satisfied with $\setndeg{G_k}{v} \ge k$.

\begin{proof}[proof of Lemma \ref{lem:k-core}]
    The Greedy algorithm just moves any node whose degree is the lowest in the remaining graph $\mathcal{H}$. Let us remark $A(k)$ as the nodeset of k-core for specific k. We claim that when deleting a node $u\in A(k)$, there mustn't be any node $v\not\in A(k)$ in the remaining graph $\mathcal{H}$. We prove it by way of contradiction, w.l.o.g,  we set $u$ as the first node to be deleted in $A(k)$ in Greedy, therefore $u$ has the lowest degree in $\mathcal{H}$ and $\setndeg{\mathcal{H}}{v}\ge \setndeg{\mathcal{H}}{u}\ge \setndeg{A(k)}{u}\ge k$ for any node $v$ in $\mathcal{H}$, which produces a k-core subgraph with a larger size, and it leads to a contradiction.
\end{proof}

\begin{proof}[Proof of Theorem \ref{th:pruning}]
     We set $\mathcal{H}_k (k>0)$ is the remaining graph after k iterations in Algorithm \ref{alg:pruning} and $\mathcal{H}_0$ is the initial whole graph, $\mathcal{H}_{k+1}^{'}$ is the nodeset to be deleted in k+1 iteration. Therefore, $\mathcal{H}_{k+1}=\mathcal{H}_{k} \setminus \mathcal{H}_{k+1}^{'}$,
     
     \begin{equation*}
         \begin{aligned}
            \rho(\mathcal{H}_{k+1})
            &=\rho(\mathcal{H}_{k}\setminus \mathcal{H}_{k+1}^{'}) \\
            &=\frac{\weights(\edges(\mathcal{H}_{k}))-\weights(\edges(\mathcal{H}_{k+1}^{'}))}{|\mathcal{H}_{k}|-|\mathcal{H}_{k+1}^{'}|}\\
            &\ge\frac{\rho(\mathcal{H}_{k})\cdot |\mathcal{H}_{k}|-\sum_{v \in \mathcal{H}_{k+1}^{'}}{\setndeg{\mathcal{H}_k}{v}}}{|\mathcal{H}_{k}|-|\mathcal{H}_{k+1}^{'}|}\\
            &> \frac{\rho(\mathcal{H}_{k})\cdot |\mathcal{H}_{k}|-\sum_{v \in \mathcal{H}_{k+1}^{'}}{\rho(\mathcal{H}_{k})}}{|\mathcal{H}_{k}|-|\mathcal{H}_{k+1}^{'}|}\\
            &=\frac{\rho(\mathcal{H}_{k})\cdot |\mathcal{H}_{k}|-\rho(\mathcal{H}_{k})\cdot |\mathcal{H}_{k+1}^{'}|}{|\mathcal{H}_{k}|-|\mathcal{H}_{k+1}^{'}|}\\
            &=\rho(\mathcal{H}_{k})
         \end{aligned}
     \end{equation*}

    That means the density of graph $\mathcal{H}$ monotonically increases in iterations, then any deleted node has a lower degree (when it is being deleted) than the final density, i.e., $\delta$. Therefore, the remaining graph $\mathcal{H}$ is a $\delta$-core and it is the graph of some time of the greedy search according to Lemma \ref{lem:k-core}. 

    We claim that the process before getting the $\delta$-core is a monotonic increasing phase of density in Greedy. We can confirm two facts:
    \begin{enumerate}[label={\arabic*.}]
    \item During Greedy, if there is a node $u\in\mathcal{H}_{1}^{'}$ existing in the remaining graph, the deletion in Greedy will increase the density of the remaining graph. That's because if Greedy deletes a node $v\not\in \mathcal{H}_{1}^{'}$, then $\setndeg{\mathcal{H}}{v}\le \setndeg{\mathcal{H}}{u}<\rho(\graph)\le\rho(\mathcal{H})$. $\rho(\mathcal{H})$ will increase and $\rho(\graph)\le\rho(\mathcal{H})$ still holds up. If Greedy deletes the node $u$, now that $\setndeg{\mathcal{H}}{u}\le \rho(\mathcal{H})$, then $\rho(\mathcal{H})$ will also increase and $\rho(\graph)\le\rho(\mathcal{H})$ holds up.
    \item During Greedy, if there is a node $u\in\mathcal{H}_{k+1}^{'}$ existing in the remaining graph $\mathcal{H}>\mathcal{H}_k$, and there isn't any node belonging to $\mathcal{H}_{k}^{'}$. Then: $\rho(\mathcal{H})\ge\rho(\mathcal{H}_k)$ because we delete more nodes with lower degrees. Therefore, when we deletes a node $v\not\in\mathcal{H}_{k+1}^{'}$, then $\setndeg{\mathcal{H}}{v}\le \setndeg{\mathcal{H}}{u}<\rho(\graph)\le\rho(\mathcal{H}_{k})\le\rho(\mathcal{H})$, then $\rho(\mathcal{H})$ will increase and $\rho(\graph)\le\rho(\mathcal{H})$ holds up. If Greedy deletes the node $u$, now that $\setndeg{\mathcal{H}}{v}\le\rho(\mathcal{H})$, $\rho(\mathcal{H})$ will also increase and $\rho(\graph)\le\rho(\mathcal{H})$ holds up.
\end{enumerate}

Therefore, the density monotonically increases in Greedy before getting $\delta$-core.
\end{proof}

\begin{table*}[htbp]
    \centering
    \vspace{-0.1in}
    \caption{Dataset source and density of algorithms}
    \label{tab:density}
    \resizebox{0.98\linewidth}{!}{
    \begin{tabular}{l|c|c|c|c|c||c|c} \toprule
    \textbf{Dataset}&\textbf{Source}&\textbf{Type}&\textbf{Pruning}&\textbf{w\_app}&\textbf{exact}&\textbf{DLL}&\textbf{uw\_Pruning+DLL}\\
    \midrule
    ca-HepPh                    & Stanford’s SNAP database&scholar collaboration network
                       & 119     & 119     & 119     & 119     & 119        \\
    comm-EmailEnron             & Stanford’s SNAP database&communication
                      & 37.316  & 37.344  & 37.344  & 37.337  & 37.337     \\
    ca-AstroPh                  & Stanford’s SNAP database&scholar collaboration network
                       & 28.481  & 29.616  & 32.11   & 29.552  & 29.552     \\
    PP-Pathways                 & Stanford’s SNAP database&protein interaction network
& 74.159  & 77.995  & 77.995  & 77.995  & 77.995     \\
    soc-Twitter\_ICWSM          & konect&social network
                                         & 25.678  & 25.683  & 25.69   & 25.686  & 25.685    \\
    soc-sign\_slashdot          & Stanford’s SNAP database&social network
                      & 39.376  & 42.132  & 42.132  & 42.132  & 42.132     \\
    rating-StackOverflow        & konect& social network
                                         & 20.209  & 20.209  & 20.21   & 20.209  & 20.209     \\
    soc-sign\_epinion           & Stanford’s SNAP database&social network
                        & 80.168  & 85.599  & 85.637  & 85.589  & 85.589     \\
    ego-twitter       &  Stanford’s SNAP database&social network
                        & 59.281  & 68.414  & 69.622  & 68.414  & 68.414     \\
    soc-Youtube                 &  Stanford’s SNAP database&social network
                        & 45.545  & 45.58   & 45.599  & 45.576  & 45.577     \\
    comm-WikiTalk               &  Stanford’s SNAP database&communication
                       & 114.139 & 114.139 & 114.139 & 114.139 & 114.139    \\
    nov\_user\_msg\_time        & We own it privately.&social network
                           & 278.815 & 278.815 & 278.815 & 278.815 & 278.815    \\
    cit-Patents                 & AMiner scholar datasets&scholar collaboration network
                        & 132.776 & 135.706 & 137.261 & 135.706 & 135.706    \\
    soc-Twitter\_ASU            & ASU&social network
        & 593.847 & 593.847 & 593.847 & 593.847 & 593.847    \\
    soc-Livejournal             & Livejournal&social network
                                    & 104.596 & 104.601 & 104.609 & 104.603 & 104.603   \\
    soc-Orkut                   & Stanford’s SNAP database&social network
                    & 227.861 & 227.872 & 227.874 & 227.872 & 227.872   \\
    soc-SinaWeibo               & Network Repository&social network
                             & 164.967 & 165.193 & 165.415 & 165.196 & 165.191    \\
    \hline
    wang-tripadvisor            & konect&rating network
                                         & 13.442  & 13.873  & 14.082  &--         &--           \\
    rec-YelpUserBusiness        & Network Repository&rating network
                             & 87.825  & 87.912  & 87.921  &--         &--            \\
    bookcrossing                & konect &rating network
                                        & 92.148  & 92.322  & 92.374  &--         &--            \\
    librec-ciaodvd-review       & konect&rating network
                                         & 233.553 & 233.59  & 233.597 &--         &--            \\
    movielens-10m   & konect &rating network
                                        & 1351.35 & 1351.35 & 1351.35 &--         &--           \\
    epinions         & konect &rating network
                                         & 595.302 & 595.314 & 595.316 &--        &--           \\
    libimseti               & konect &social network
                                         & 1645.71 & 1645.73 & 1645.73 &--         &--       \\
    rec-movielens & Network Repository &rating network
                             & 1801.16 & 1801.16 & 1801.16 &--         &--            \\
    yahoo-song              & konect   &rating network
                                    & 46725.2 & 46725.2 & 46725.2 &--     &-- \\
    \hline
    \multicolumn{8}{l}{\multirow{2}{0.95\textwidth}{\textbf{note:} : \textbf{Pruning} (w\_Pruning,uw\_Pruning). \textbf{w\_app}: approximation algorihtms on weighted graph(Priority Tree,Pruning+Priority Tree,BBST). \textbf{exact}: exact algorithms(maxflow,w\_Pruning+maxflow). \textbf{DLL}:Doubly-linked list.}} \\
    \multicolumn{8}{l}{}   \\
    \bottomrule
    \end{tabular}
    }
\end{table*}

\begin{table*}[htbp]
    \centering
    \caption{Comparison between LOWD and baselines without the pruning.}
    \label{tab:withoutpruning}
    \vspace{-0.1in}
    \resizebox{0.95\textwidth}{!}{
        \begin{tabular}{l|l|l|l|l|l|l|l|l|l|l}\toprule
        &\multicolumn{5}{|c|}{Running time}&\multicolumn{5}{c}{The number of iteration rounds}\\
        \midrule
        \textbf{Dataset} & \textbf{LOWD} & \textbf{Greedy++} & \textbf{FW} & \textbf{FISTA} & \textbf{FW\_M} & \textbf{LOWD} & \textbf{Greedy++} & \textbf{FW} & \textbf{FISTA} & \textbf{FW\_M}\\
        \midrule
        ca-HepPh                 & 0.0168      & \first{0.0159}     & 0.0208        & 0.1109    & \second{0.0161}    & 1             & 1            & 2               & 3         & 1        \\
        comm-EmailEnron         & \second{0.1516}      & \first{0.0714}     & 0.5551        & 27.5346    & 0.3012    & 19            & 2            & 88              & 567       & 46       \\
        ca-AstroPh               & \first{0.3158}      & 0.8396     & 2.312         & 24.2470    & \second{0.6593}    & 36            & 33           & 343             & 375       & 92       \\
        PP-Pathways                    & \second{0.1174}      & \first{0.0333}     & 0.3214        & 12.6981    & 1.7086    & 8             & 1            & 18              & 131        & 167      \\
        soc-sign\_slashdot       & \second{0.2275}      & \first{0.0704}     & 0.7881        & 11.6623     & 1.1051    & 11            & 1            & 44              & 81       & 64       \\
        soc-sign\_epinion        & \first{0.3213}      & 1.4706     & 1.1853        & 11.1319    & \second{0.9581}    & 9             & 12           & 40              & 46        & 33       \\
        soc-Twitter\_ICWSM             & \first{1.8343}      & 30.6654    & 96.3405       & 1207.2601   & \second{6.1467}    & 49            & 114          & 2816            & 4901       & 175      \\
        rating-StackOverflow  & \second{2.1887}      & \first{1.0537}     & 10.0728       & 567.1578  & 18.7819   & 32            & 2            & 178             & 1125      & 341      \\
        ego-twitter                   & \first{1.6001}      & 2.9768     & 22.0971       & 2567.2520   & \second{1.7266}    & 34            & 21           & 550             & 8115       & 41       \\
        soc-Youtube             & \first{6.5253}      & 133.3086   & 82.3855       & 1001.4894  & \second{22.0866}   & 46            & 129          & 612             & 929      & 161      \\
        comm-WikiTalk            & \second{4.6758}      & \first{1.984}      & 118.307       & 1933.6041 & 414.9261  & 15            & 1            & 407             & 1151      & 1386     \\
        nov\_user\_msg\_time          & \first{341.1144}    & 4464.4998  & 2581.9086     &  9826.2465         & \second{718.8805}  & 97            & 174          & 624             &   454        & 177      \\
        cit-Patents\_AMINER           & \first{291.4452}    & 846.3628   & \second{837.8446}      & 27554.332 & 1523.2698 & 104           & 56           & 259             & 1913       & 471      \\
        soc-Twitter\_ASU              & \second{120.8906}    & \first{19.2912}    &    511.3170         &    12174.952       &   2047.9598       & 40            & 1            &     156            &   1244        &  603     \\
        soc-Livejournal & \makecell[c]{--}          & \makecell[c]{--}         & \makecell[c]{--}            & \makecell[c]{--}       & \makecell[c]{--}    & \makecell[c]{--}    & \makecell[c]{--}       & \makecell[c]{--}            & \makecell[c]{--}      & \makecell[c]{--}     \\
        soc-Orkut               & \makecell[c]{--}        & \makecell[c]{--}      & \makecell[c]{--}            & \makecell[c]{--}      & \makecell[c]{--}     & \makecell[c]{--}           & \makecell[c]{--}           & \makecell[c]{--}         & \makecell[c]{--}      & \makecell[c]{--}  \\
        soc-SinaWeibo\_NETREP          & \makecell[c]{--}       & \makecell[c]{--}      & \makecell[c]{--}      & \makecell[c]{--}    & \makecell[c]{--}   & \makecell[c]{--}    & \makecell[c]{--}     & \makecell[c]{--}    & \makecell[c]{--}     & \makecell[c]{--}   \\
        \hline
        wang-tripadvisor                    & \first{0.6686}      & 23.6761    & 30.209        & 198.1408  & \second{15.6705}   & 93            & 187          & 3894            & 4512  & 2007     \\
        rec-YelpUserBusiness  & \first{0.4621}      & 24.5879    & 5.1698        & 62.8514    & \second{2.1192}    & 54            & 228          & 626             & 886 & 239      \\
        bookcrossing        & \first{1.5177}      & 92.9406    & 27.4916       &910.8345& \second{8.877}     & 78            & 259          & 1398            &5510 & 448      \\
        librec-ciaodvd-review      & \first{4.2248}      & 75.7599    & 231.5822      &1882.2756& \second{38.6405}   & 79            & 143          & 4635            &4339& 773      \\
        movielens-10m               & \second{49.2341}     & \first{2.7029}     & 129.6344      &5329.5159& 60.719    & 159           & 1            & 450             &1611& 211      \\
        epinions                     & \first{30.4071}     &  3957.2229        & 4302.242      &$>$37786.731 & \second{743.6429}  & 64            &     688
       & 8893            &$>$10000& 1582     \\
        libimseti                           & \first{25.7101}     & 5276.5399        & 751.3750            &24832.745& \second{146.2627}        & 39            & 710     & 1196              &4632& 234       \\
        rec-movielens            & \first{50.4427}     & 6928.9143       & 306.0513       &24923.281 & \second{148.2539}       & 54            & 733          & 355         & 3195 & 172      \\
        yahoo-song                          & \makecell[c]{--}    & \makecell[c]{--} & \makecell[c]{--}        & \makecell[c]{--}      & \makecell[c]{--} & \makecell[c]{--}           & \makecell[c]{--}  & \makecell[c]{--}  & \makecell[c]{--}     & \makecell[c]{--} \\
        \hline
        \multicolumn{11}{l}{\multirow{2}{0.95\textwidth}{\textbf{note:} We ignore some datasets which are very large."$>$10000" and "$>$37786.731" means we run 10000 iterations(running time: 37786.731s) and can't still detect the densest subgraph.}} \\
        \multicolumn{11}{l}{} \\
        \bottomrule
        \end{tabular}
    }
\end{table*}

\begin{figure*}[t]
    \centering
    \subfigure[soc-Twitter\_ICWSM]{\includegraphics[width=0.24\linewidth]{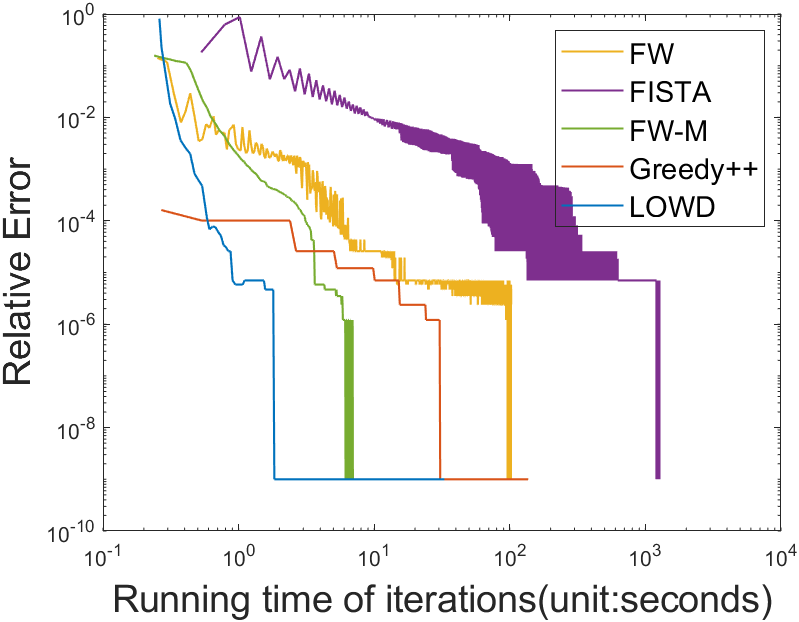}}
    \subfigure[soc-Youtube]{\includegraphics[width=0.24\linewidth]{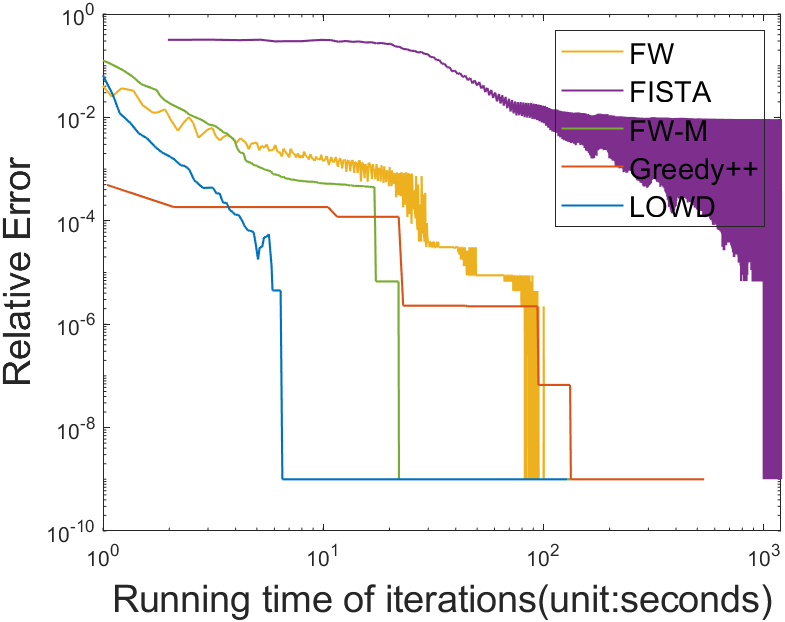}}
    \subfigure[comm-WikiTalk]{\includegraphics[width=0.24\linewidth]{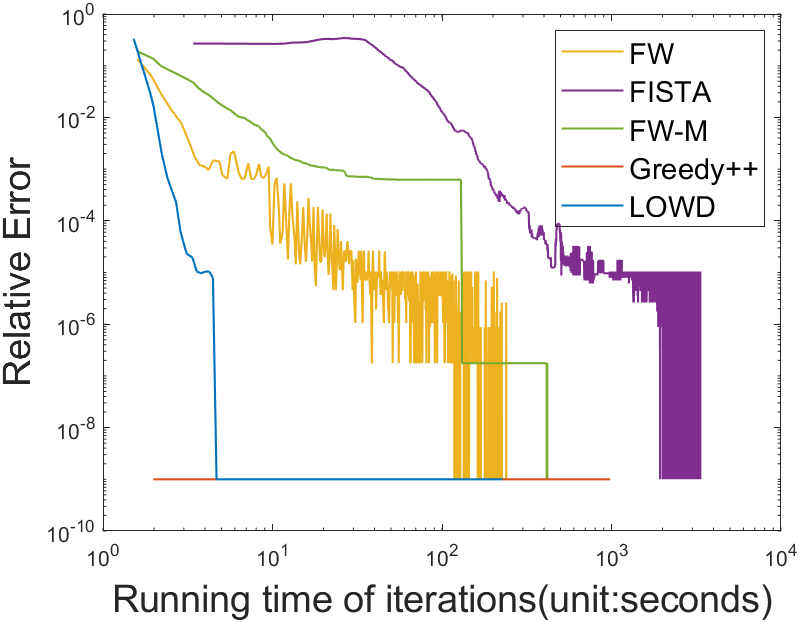}}
    \subfigure[nov\_user\_msg\_time]{\includegraphics[width=0.24\linewidth]{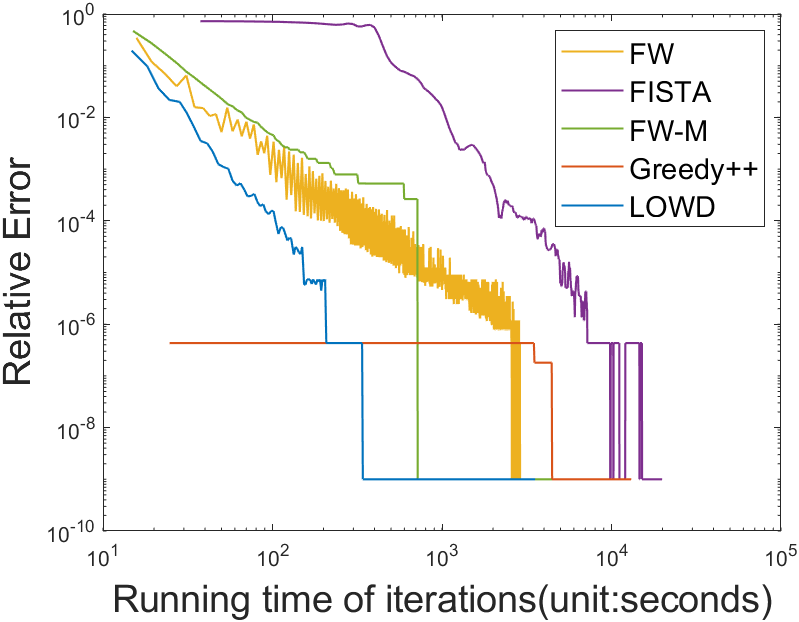}}
    \subfigure[cit-Patents\_AMINER]{\includegraphics[width=0.24\linewidth]{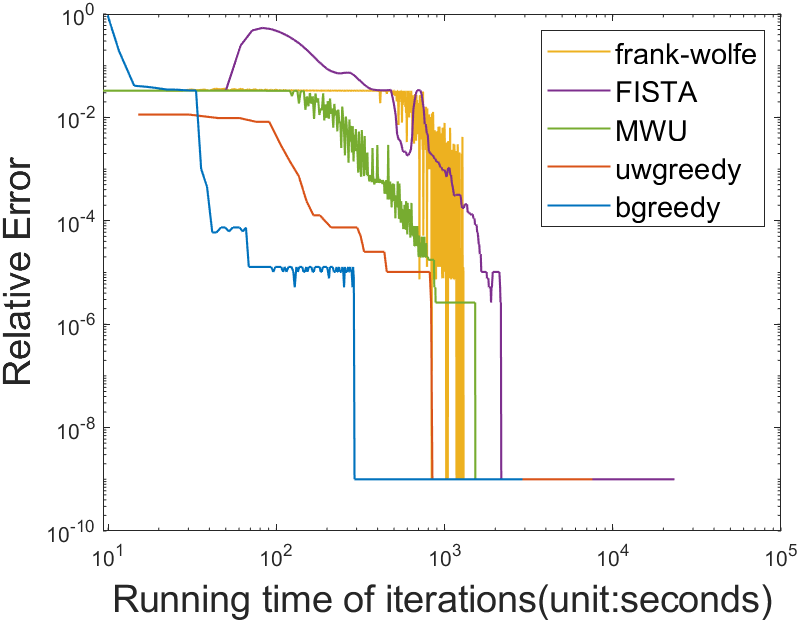}}
    \subfigure[rec-YelpUserBusiness]{\includegraphics[width=0.24\linewidth]{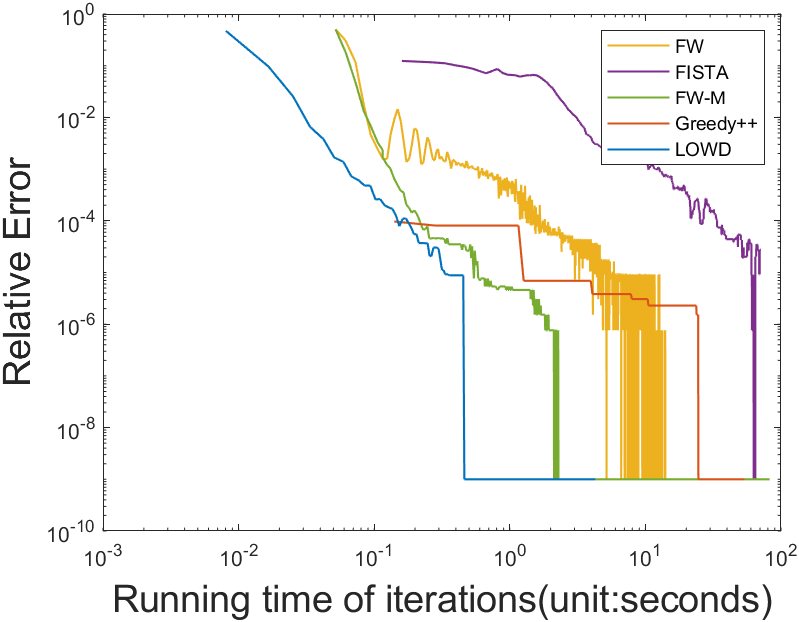}}
    \subfigure[librec-ciaodvd-review]{\includegraphics[width=0.24\linewidth]{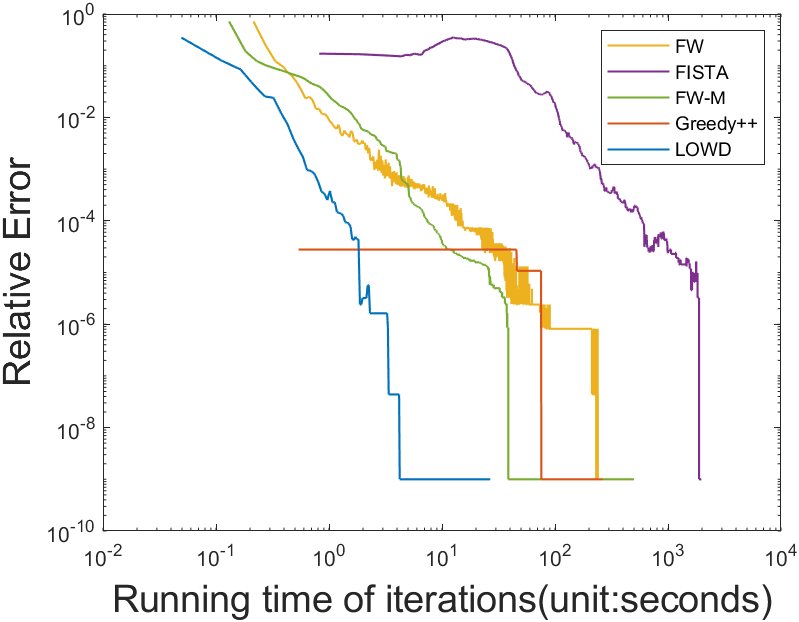}}
    \subfigure[movielens-10m]{\includegraphics[width=0.24\linewidth]{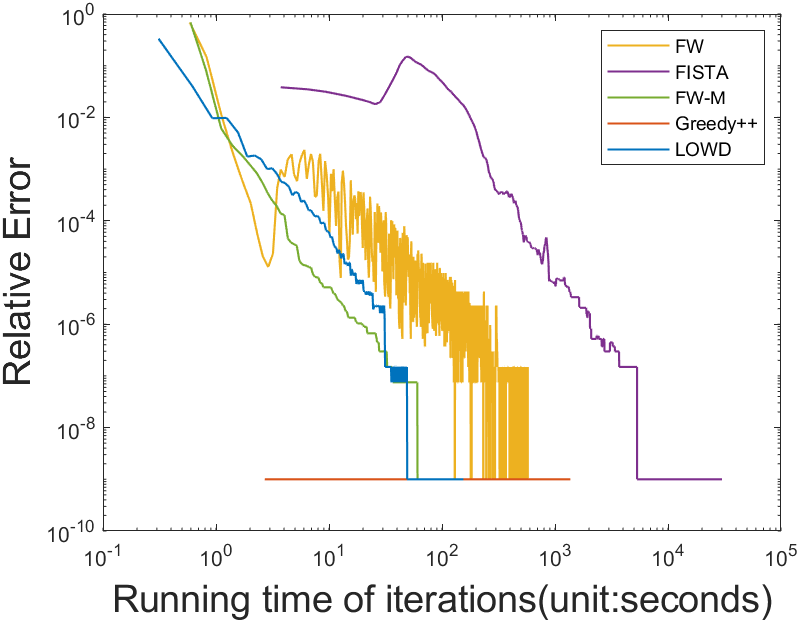}}
    \caption{densest}
    \label{fig:densest}
\end{figure*}

\begin{figure}[H]
    \centering
    \includegraphics[width=0.7\linewidth]{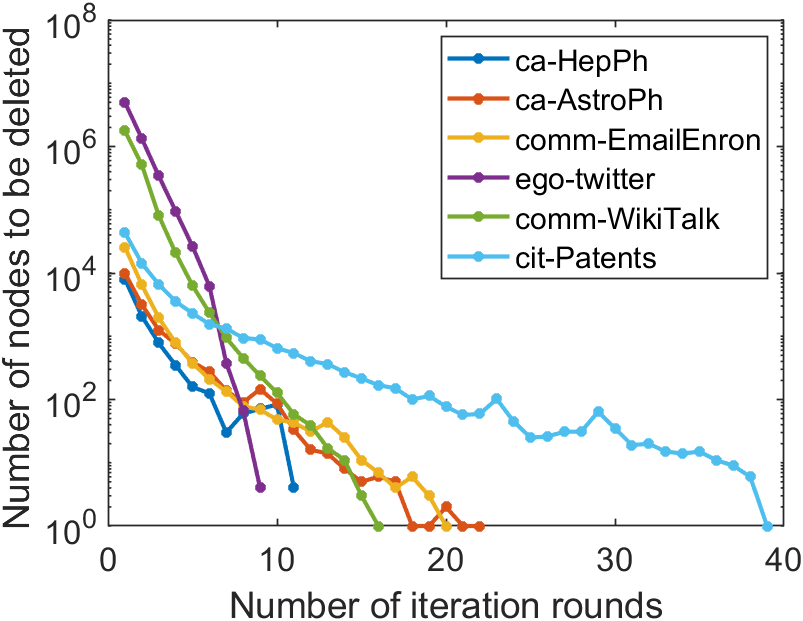}
    \vspace{-0.1in}
    \caption{Exponential decrease in the number of deleted nodes.}
    \label{fig:Exponential}
\end{figure}

\begin{algorithm}[htbp]
\caption{\textsc{Greedy DSPSolver}}
\label{alg:greedy}
\KwIn{Undirected graph $\graph$; density metric $\rho(\cdot)$}
\KwOut{$\optset$: the nodeset of the densest subgraph of $\graph$.}
   $\subnode, \, \optset \leftarrow \nodes$

   \While{$\subnode \ne \emptyset$}{
       \linecomment{find the vertex $u^{*}$ with the lowest degree in $\subnode$}
       
       $u^{*} \leftarrow \argmin_{u \in \subnode} \setndeg{\subnode}{u})$
       
       Remove $u^{*}$ and all its adjacent edges from $\graph$.

       \linecomment{$\subnode \backslash \{u\}$: the remaining nodeset without $u$}

       $\subnode \leftarrow \subnode \backslash \{u\}$

        \If{$\rho(\subnode) > \rho(\optset)$}{
        
            $\optset \leftarrow \subnode$
        }
       
   }

    \Return{$\optset$.}
\end{algorithm}

\end{document}